\numberwithin{equation}{section}
\newtheorem{theorem}{\bf Theorem}[section]
\newtheorem{corollary}{\bf Corollary}[section]
\newtheorem{proposition}{\bf Proposition}[section]
\numberwithin{equation}{section}
\newcommand{\bse}{\begin{subequations}}
\newcommand{\ese}{\end{subequations}}
\def\undertilde#1{\mathord{\vtop{\ialign{##\crcr
$\hfil\displaystyle{#1}\hfil$\crcr\noalign{\kern1.5pt\nointerlineskip}
$\hfil\widetilde{}\hfil$\crcr\noalign{\kern-6.5pt}}}}}
\def\underhat#1{\mathord{\vtop{\ialign{##\crcr
$\hfil\displaystyle{#1}\hfil$\crcr\noalign{\kern1.5pt\nointerlineskip}
$\hfil\widehat{}\hfil$\crcr\noalign{\kern-6.5pt}}}}}
\def\underbar#1{\mathord{\vtop{\ialign{##\crcr
$\hfil\displaystyle{#1}\hfil$\crcr\noalign{\kern1.5pt\nointerlineskip}
$\hfil\bar{}\hfil$\crcr\noalign{\kern-6.5pt}}}}}
\newcommand{\be}{\begin{equation}}
\newcommand{\ee}{\end{equation}}
\newcommand{\bea}{\begin{eqnarray}}
\newcommand{\eea}{\end{eqnarray}}
\newcommand{\ddint}{\iint_{D} d\zeta(l,l^\prime)}
\title{Lagrangian 3-form structure for the Darboux system and the KP hierarchy} 
\author{Frank W Nijhoff$^\dagger$ \\
$\dagger$ {\small  School of Mathematics, University of Leeds, Leeds LS2 9JT, UK }\\
\small \texttt{F.W.Nijhoff@leeds.ac.uk}
}
\begin{document}

\maketitle

\begin{abstract}
A Lagrangian multiform structure is established for a generalisation of the Darboux system 
describing orthogonal curvilinear coordinate systems. It has been shown in the past that 
this system of coupled PDEs is in fact an encoding of the entire Kadomtsev-Petviashvili 
(KP) hierarchy in terms so-called Miwa variables. Thus, in providing  
a Lagrangian description of this multidimensionally consistent system amounts to a 
new Lagrangian 3-form structure for the continuous KP system. A generalisation to the 
matrix (also known as non-Abelian) KP system is discussed.

\paragraph{Keywords:} integrable system, multi-dimensional consistency, Lagrangian 
multiforms, KP hierarchy, Darboux systems. 
\end{abstract}

\section{Introduction}\label{S:Intro}

The notion of Lagrangian multiforms was introduced in \cite{LobbNij2009} to provide a variational formalism 
for systems integrable in the sense of multidimensional consistency (MDC). This novel variational approach 
to integrable systems allows for the derivation of an entire system (called a \textit{hierarchy}) of 
simultaneous compatible equations from a single variational framework, in which the conventional Lagrange 
function is replaced by a Lagrangian $d$-form integrated over arbitrary hypersurfaces in a space of independent 
variables of arbitrary dimension. Lagrangian multiform theory has undergone a significant development in the last 
decade, (cf. e.g. \cite{Verm,Sleigh-thesis}, or \cite{HJN16} and references therein). It 
has become evident that Lagrangian multiform 
(  or, in its variant formulation,  
\textit{pluri-Lagrangian systems}, cf. e.g. \cite{BobSur15,Suris2012,BolPetSur2013b,BolPetSur2015,Suris2016,SurVerm} ) form a 
universal variational aspect of integrability. It distinguished itself from the conventional least-action principle 
in that, where the latter produces through the standard Euler-Lagrange (EL) equations only 
one equation per component of the field variable, the multiform EL equations comprise a multitude 
of compatible equations for every component of the fields. Furthermore, the Lagrangian components themselves 
have to be very special (they have to be `admissible', which implies `integrable'), and in a precise sense the 
Lagrangians themselves can be considered as solutions of the systems of generalised EL equations.   
   
In this note I will focus on the Darboux system of equations, \cite{Darboux}, which in the original 
notation of Darboux reads 
\begin{equation}\label{eq:Darboux}
\frac{\partial \beta_{kk'}}{\partial \rho_{k''}}= \beta_{kk''}\beta_{k''k'}\  , \quad 
\end{equation}
where the indices $k,k',k''$ run over a set of integers, and the quantities $\beta_{kk'}$, etc., are 
functions of a set of coordinates $\rho_1,\cdots,\rho_n$. These equations describe conjugate 
nets for a system of curvilinear orthogonal coordinates, following on from earlier work by Lam\'e, \cite{Lame}. 
It is well-known that the set of equations 
\eqref{eq:Darboux}, or generalisations thereof, are closely related to integrable three-dimensional 
equations, cf. e.g. \cite{Zakharov96,Doliwa-etal,Doliwa}, in particular the $N$-component wave equation. In fact, in 
\cite{MartinezKonop} it was shown that 
they form a realisation of the KP hierarchy in terms of so-called \textit{Miwa variables}, \cite{Miw82}, which 
are variables depending on a continuous parameter associated with an underlying lattice structure. 
Here, I will show that this set of equations possesses a Lagrangian 3-form structure, 
in the sense of \cite{LNQ,SNC21},   cf. also \cite{BolPetSur2015} . Whereas our previous treatment of the Lagrange 
multiform structure of the continuous KP hierarchy used a representation in terms of pseudo-differential 
operators, going back to \cite{DickeyKP, Dickey-book}, the multiform structure of the  
Darboux system is more compact, and can be viewed as a generating system for the KP hierarchy,
encoding the latter in a more covariant way. 

In the next section I will present this 3-form structure and demonstrate the salient 
multiform features, while in the ensuing section I will discuss the connection to the KP 
hierarchy, and further generalisations in the remainder. Some speculative applications are discussed in the 
Conclusion section.

\section{Lagrangian 3-form structure for the generalised Darboux system} 

The generalised Darboux system reads
\bse\label{eq:Beqs}\begin{align}
&\frac{\partial B_{qr}}{\partial\xi_p}=B_{qp}B_{pr}\  , \quad 
\frac{\partial B_{rq}}{\partial\xi_p}=B_{rp}B_{pq}\  ,\label{eq:Beqsa} \\
&\frac{\partial B_{pr}}{\partial\xi_q}=B_{pq}B_{qr}\  , \quad 
\frac{\partial B_{rp}}{\partial\xi_q}=B_{rq}B_{qp}\  , \label{eq:Beqsb}\\
&\frac{\partial B_{pq}}{\partial\xi_r}=B_{pr}B_{rq}\  , \quad 
\frac{\partial B_{qp}}{\partial\xi_r}=B_{qr}B_{rp}\  , \label{eq:Beqsc}
\end{align} \ese
where the $B_{pq}$, etc., are scalar functions 
(but can be readily generalised to matrices) of the independent variables 
$\xi_p$, $\xi_q$ and $\xi_r$, which are continuous variables labelled by  
parameters $p$, $q$ and $r$ respectively, which themselves are in principle continuous 
variables taking 
values in a continuous subset of the real or complex numbers (hence, the term `generalised').
We assume that these parameters are distinct, and we will not consider for now 
quantities $B$ for which they coincide (quantities of the type $B_{pp}$).  A main property of the system 
\eqref{eq:Beqs} is that it can be extended in a consistent way to an arbitrarily large set of
copies of these equations in terms of additional variables $\xi_s$, etc. similarly labelled 
by values of the parameters. This compatibility is expressed as follows.

\begin{theorem} 
The PDE system \eqref{eq:Beqs} for the quantities $B_{\cdot\cdot}$ is multidimensionally consistent.  
\end{theorem}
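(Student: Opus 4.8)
The plan is to establish multidimensional consistency by adjoining additional independent variables $\xi_s,\xi_t,\dots$, each labelled by its own distinct parameter, and imposing on every triple of variables a copy of the system \eqref{eq:Beqs}. Concretely, for any three pairwise-distinct indices $a,b,c$ I would postulate the evolution law $\partial B_{ab}/\partial\xi_c = B_{ac}B_{cb}$, so that each off-diagonal quantity $B_{ab}$ carries a flow in every direction $\xi_c$ with $c\neq a,b$. Multidimensional consistency then amounts to the claim that all these flows mutually commute, i.e.\ that the extended system of first-order evolution equations is Frobenius-integrable; equivalently, that the mixed second derivatives $\partial_{\xi_c}\partial_{\xi_d}B_{ab}$ are symmetric under interchange of the two directions $\xi_c,\xi_d$.

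First I would fix a quantity $B_{ab}$ together with two directions $\xi_c,\xi_d$, where $c,d$ are distinct from one another and from $a,b$, and compute the mixed derivative in one order. Applying the product rule and substituting the first-order equations for $\partial_{\xi_d}B_{ac}$ and $\partial_{\xi_d}B_{cb}$ (both legitimate, since $d\neq a,c$ and $d\neq c,b$) gives
\begin{equation}
\frac{\partial}{\partial\xi_d}\frac{\partial B_{ab}}{\partial\xi_c}
= B_{ad}B_{dc}B_{cb} + B_{ac}B_{cd}B_{db}.
\end{equation}
Repeating the calculation in the opposite order produces exactly the same two cubic monomials, with the roles of $c$ and $d$ merely interchanged in the sum, so the two expressions coincide. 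Hence the cross-derivative condition holds for this representative $B_{ab}$ and pair of directions, and because the index labelling of the system is fully symmetric under permutations, this single case establishes the result for all quantities and all pairs of flows.

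The only genuine subtlety is bookkeeping: one must confirm that every intermediate first-order equation invoked in the differentiation actually belongs to the extended system, i.e.\ that the third index is in each instance distinct from the pair being differentiated. Since we have restricted throughout to distinct parameters and never admit quantities of the type $B_{pp}$, this requirement is met automatically, and no real obstruction arises. Finally, I would observe that the computation carries over verbatim to the matrix (non-Abelian) case: the two cubic terms $B_{ad}B_{dc}B_{cb}$ and $B_{ac}B_{cd}B_{db}$ retain their left-to-right factor ordering irrespective of the order of differentiation, so their sum is unaffected by non-commutativity of the entries, and commutativity of the flows persists with no additional hypothesis.
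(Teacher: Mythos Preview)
Your proposal is correct and follows essentially the same approach as the paper: extend the system by adjoining a new direction $\xi_s$ with the analogous evolution law and verify by direct computation that the mixed second derivatives of each $B_{ab}$ in two distinct directions agree. You in fact carry out the cubic computation more explicitly than the paper does, and your remark that the argument survives unchanged in the non-Abelian setting is a nice bonus anticipating Section~5.
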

\begin{proof}
The proof is by direct computation, introducing a fourth variable $\xi_s$ and 
associated lattice direction with parameter $s$, such that the system 
of independent variables is extended to include $B_{ps},B_{qs},B_{rs}$  and 
$B_{sp},B_{sq},B_{sr}$ obeying relations of the form 
\[ \frac{\partial B_{ps}}{\partial\xi_q}=B_{pq}B_{qs}\  , \quad  \] 
etc. and where the other variables depend also on $\xi_s$ such that  
\[ \frac{\partial B_{pq}}{\partial\xi_s}=B_{ps}B_{sq}\  , \quad  \] 
etc. We then establish by direct computation from the extended system of 
equations comprising \eqref{eq:Beqs} and the PDEs w.r.t. $\xi_s$, the relation
\[ \frac{\partial}{\partial\xi_s}\left(\frac{\partial}{\partial\xi_p}B_{qr} \right) = \frac{\partial}{\partial\xi_p}\left(\frac{\partial}{\partial\xi_s}B_{qr} \right)\  , \] 
by direct computation. Similarly all relations obtained from cross-differentiation 
hold by the same token. 
\end{proof} 

The system \eqref{eq:Beqs} possesses a \textit{Lax multiplet}, cf. e.g. \cite{Doliwa}, in the following sense. 
\begin{proposition}
 
If the system \eqref{eq:Beqs} is satisfied each of the following linear overdetermined systems 
(one system for a parameter-labelled family of functions $\Phi_\cdot$ and another system for the 
parameter-family of functions $\Psi_\cdot$)
\begin{equation}\label{eq:Lax}
\frac{\partial \Phi_{q}}{\partial\xi_p}=B_{qp}\Phi_{p}\  , \quad {\rm and}\quad 
\frac{\partial \Psi_{r}}{\partial\xi_p}=\Psi_{p}B_{pr}\  ,
\end{equation} 
respectively, (and similar relations for all variables $\xi_q$ and $\xi_r$) is \textit{consistent} in the 
sense of possessing a common general solution.  
 
\end{proposition}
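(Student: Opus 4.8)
The plan is to recognise that "consistency" of each overdetermined linear system means precisely that the mixed second-order partial derivatives of the unknown functions are symmetric, so that by the standard Frobenius-type compatibility argument a common general solution exists. Since both systems are linear in the $\Phi_\cdot$ (resp.\ $\Psi_\cdot$) with coefficients built from the $B_{\cdot\cdot}$, the entire burden is to show that cross-differentiating the defining relations in two different lattice directions produces the same expression, where the derivatives of the coefficients $B_{\cdot\cdot}$ that appear are supplied by the Darboux equations \eqref{eq:Beqs}.

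First I would fix two distinct directions $\xi_p$ and $\xi_{p'}$ and compute the second derivative of $\Phi_q$ in both orders. Differentiating $\partial \Phi_q/\partial \xi_p = B_{qp}\Phi_p$ with respect to $\xi_{p'}$ by the product rule yields
\[
\frac{\partial^2 \Phi_q}{\partial \xi_{p'}\partial \xi_p} = \frac{\partial B_{qp}}{\partial \xi_{p'}}\,\Phi_p + B_{qp}\,\frac{\partial \Phi_p}{\partial \xi_{p'}}.
\]
Here I would substitute the Darboux relation $\partial B_{qp}/\partial \xi_{p'} = B_{qp'}B_{p'p}$ and the companion Lax relation $\partial \Phi_p/\partial \xi_{p'} = B_{pp'}\Phi_{p'}$, obtaining $B_{qp'}B_{p'p}\Phi_p + B_{qp}B_{pp'}\Phi_{p'}$. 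Repeating with the order of differentiation reversed (and using $\partial B_{qp'}/\partial \xi_p = B_{qp}B_{pp'}$, $\partial \Phi_{p'}/\partial \xi_p = B_{p'p}\Phi_p$) gives the same two terms. Hence the $\Phi$-system is compatible. The structural feature that makes this close is that the product form of the Darboux right-hand sides is exactly matched by the product form of the Lax coefficients, so the two cross-terms coincide term by term.

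For the $\Psi$-system the computation is entirely analogous, the only difference being that the coefficient now multiplies $\Psi_\cdot$ from the right; differentiating $\partial \Psi_r/\partial \xi_p = \Psi_p B_{pr}$ and using $\partial \Psi_p/\partial \xi_{p'} = \Psi_{p'}B_{p'p}$ together with $\partial B_{pr}/\partial \xi_{p'} = B_{pp'}B_{p'r}$ once more reproduces a symmetric pair of terms. I would remark that this right-versus-left ordering is immaterial in the scalar case but is precisely what is needed for the two systems to close in the non-Abelian (matrix) generalisation mentioned after \eqref{eq:Beqs}: there $\Phi_\cdot$ and $\Psi_\cdot$ play the roles of column and row solutions of mutually adjoint linear problems.

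I do not expect a genuine obstacle here, since the statement is a direct compatibility verification rather than an existence-through-analysis result; the only points requiring care are bookkeeping ones. Specifically, one must keep track of the fact that the family $\{\Phi_\cdot\}$ is indexed by the same parameters as the fields, so that the ``similar relations for all variables'' referenced in the statement supply well-defined derivatives $\partial \Phi_p/\partial \xi_{p'}$ for \emph{every} pair of parameters, and one should note that closing the computation draws on the full set of Darboux relations \eqref{eq:Beqs} (together with their counterparts under the introduction of further directions, exactly as in the proof of the preceding theorem), not merely the single relation written down. Once the indices are handled uniformly, the symmetry of the mixed partials---and hence the existence of a common general solution---follows immediately.
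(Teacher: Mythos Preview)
Your proposal is correct and follows essentially the same route as the paper: both arguments verify consistency by cross-differentiating $\partial\Phi_q/\partial\xi_p=B_{qp}\Phi_p$ in two directions, inserting the companion Lax relations for the derivatives of $\Phi_p$ and $\Phi_{p'}$, and observing that equality of the mixed partials is equivalent to the Darboux relations \eqref{eq:Beqs}. The only cosmetic difference is that you substitute the Darboux equations directly into each mixed derivative and match terms, whereas the paper leaves the derivatives of $B_{\cdot\cdot}$ unexpanded and then reads off the Darboux equations as the coefficient-matching conditions; the underlying computation is identical.
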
 
\begin{proof}
Again, this is by direct computation. Cross-differentiation of two copies of the first equation 
of \eqref{eq:Lax} we get the equality
\begin{align*}
&\frac{\partial}{\partial\xi_r}\left(\frac{\partial \Phi_{q}}{\partial\xi_p}\right)=
\left(\frac{\partial}{\partial\xi_r}B_{qp}\right)\Phi_{p}+ B_{qp}B_{pr}\Phi_r\ \\ 
& = \frac{\partial}{\partial\xi_p}\left(\frac{\partial \Phi_{q}}{\partial\xi_r}\right)=
\left(\frac{\partial}{\partial\xi_p}B_{qr}\right)\Phi_{r}+ B_{qr}B_{rp}\Phi_p \  ,
\end{align*} 
and hence the equality of the coefficients of $\Phi_r$ and $\Phi_p$ give us the desired differential 
equations for $B_{qp}$ and $B_{qr}$ respectively. The same hold true for the `adjoint' Lax 
multiplet in terms of the functions $\Psi_\cdot$.
\end{proof} 

We note that the Lax multiplets \eqref{eq:Lax} can be obtained from the Darboux 
system itself, relying on the multidimensional consistency, by identifying the 
Lax wave functions $\Phi$ and $\Psi$ by fixing two, possibly separate, 
directions in the space of independent 
variables, $\xi_k$ and $\xi_l$, say (where $k$ and $l$ play the role of spectral parameters), such that $\Phi_p=B_{pk}$ and $\Psi_p=B_{lp}$.   
Furthermore, the quantities $\Phi$ and $\Psi$ obey a linear homogeneous set of equations of the 
form 
\bse\label{eq:homLax} 
\begin{align}
& \partial_p\partial_q\Phi_r= (\partial_p\ln \Phi_q)\partial_q\Phi_r + (\partial_q\ln \Phi_p)\partial_p\Phi_r\ , \\  
& \partial_p\partial_q\Psi_r= (\partial_p\ln \Psi_q)\partial_q\Psi_r + (\partial_q\ln \Psi_p)\partial_p\Psi_r\ , 
\end{align}\ese 
thus obeying both an identical equation. 

We now introduce the Lagrangian structure. Let us consider the following 
Lagrangian components 
\begin{align}\label{eq:Lagr}
\mathcal{L}_{pqr} = & \tfrac{1}{2}\left(B_{rq}\partial_{\xi_p}B_{qr}-B_{qr}\partial_{\xi_p}B_{rq} \right)  
  +\tfrac{1}{2}\left(B_{qp}\partial_{\xi_r}B_{pq}-B_{pq}\partial_{\xi_r}B_{qp} \right) \nonumber \\ 
   & +\tfrac{1}{2}\left(B_{pr}\partial_{\xi_q}B_{rp}-B_{rp}\partial_{\xi_q}B_{pr} \right)  
   + B_{rp} B_{pq} B_{qr} - B_{rq}B_{qp} B_{pr}  \  . 
\end{align} 

\noindent Then we have the following main statement 

\begin{theorem} 
The differential of the Lagrangian 3-form 
\begin{align}\label{eq:Lagr3form} 
{\sf L}:= &\mathcal{L}_{pqr}\,{\rm d}\xi_p\wedge{\rm d}\xi_q\wedge{\rm d}\xi_r+ 
\mathcal{L}_{qrs}\,{\rm d}\xi_q\wedge{\rm d}\xi_r\wedge{\rm d}\xi_s+ \nonumber \\ 
& \quad +\mathcal{L}_{rsp}\,{\rm d}\xi_r\wedge{\rm d}\xi_s\wedge{\rm d}\xi_p+ 
\mathcal{L}_{spq}\,{\rm d}\xi_s\wedge{\rm d}\xi_p\wedge{\rm d}\xi_q\ ,  
\end{align} 
has a `double zero' on the solutions of the set of generalised Darboux equations \eqref{eq:Beqs}, 
i.e. ${\rm d}{\sf L}$ can be written as 
\begin{equation}
{\rm d}\mathsf{L}= \,\mathcal{A}_{pqrs}\,{\rm d}\xi_p\wedge{\rm d}\xi_q\wedge{\rm d}\xi_r\wedge{\rm d}\xi_s
\end{equation}  
with the coefficient $\mathcal{A}_{pqrs}$ being a sum of products of factors which vanish 
on solutions of the EL equations. 
\end{theorem}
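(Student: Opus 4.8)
The plan is to exploit that, with only the four independent variables $\xi_p,\xi_q,\xi_r,\xi_s$ in play, $\mathrm{d}\mathsf{L}$ is automatically a top-degree form, so it suffices to extract its single coefficient. First I would record the symmetry of the Lagrangian density: a short check shows that both the skew kinetic part and the cubic potential of $\mathcal{L}_{pqr}$ in \eqref{eq:Lagr} are invariant under cyclic permutation of $(p,q,r)$ and change sign under a transposition, so $\mathcal{L}_{pqr}$ is totally antisymmetric in its three labels. This makes the $3$-form \eqref{eq:Lagr3form} well defined and, rewriting the cyclically ordered components as $\mathcal{L}_{rsp}=\mathcal{L}_{prs}$ and $\mathcal{L}_{spq}=\mathcal{L}_{pqs}$, identifies the sought coefficient as the alternating sum
\begin{equation*}
\mathcal{A}_{pqrs}=\partial_p\mathcal{L}_{qrs}-\partial_q\mathcal{L}_{prs}+\partial_r\mathcal{L}_{pqs}-\partial_s\mathcal{L}_{pqr}\ .
\end{equation*}

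Next I would pin down the Euler--Lagrange equations against which the double zero is to be measured. Varying a single field component, say $B_{bc}$, in the action built from the density $\mathcal{L}_{abc}$, the skew kinetic term integrates by parts and combines with the contribution of the cubic potential to give the condition $\partial_a B_{cb}=B_{ca}B_{ab}$, i.e.\ one of the generalised Darboux equations \eqref{eq:Beqs}. It is therefore natural to introduce the EL quantities
\begin{equation*}
\Omega^{(a)}_{bc}:=\partial_a B_{bc}-B_{ba}B_{ac}\ ,
\end{equation*}
which vanish exactly on solutions of \eqref{eq:Beqs} and of its extension in the fourth direction $\xi_s$. The goal then becomes to show that $\mathcal{A}_{pqrs}$ is a sum of products each carrying two such factors.

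The computation of $\mathcal{A}_{pqrs}$ I would organise in layers. The crucial first observation is that all second-order derivatives cancel: the skew structure $\tfrac12(B\,\partial B-\partial B\,B)$ of the kinetic terms is engineered so that the term $B_{rq}\,\partial_s\partial_p B_{qr}$ produced by $-\partial_s\mathcal{L}_{pqr}$ is exactly annihilated by the matching term in $+\partial_p\mathcal{L}_{qrs}$, and likewise for every other pair; after this cancellation $\mathcal{A}_{pqrs}$ contains no second derivatives. What survives splits into products of two first derivatives (from differentiating the kinetic terms) and first-derivative-times-$B^2$ terms (from differentiating the potential). Collecting the purely kinetic remainder by direction pair $(\partial_\alpha,\partial_\beta)$, with complementary labels $\{\gamma,\delta\}$, one finds it assembles into the antisymmetrised combination $\partial_\alpha B_{\gamma\delta}\,\partial_\beta B_{\delta\gamma}-\partial_\alpha B_{\delta\gamma}\,\partial_\beta B_{\gamma\delta}$, which is precisely the leading part of $\Omega^{(\alpha)}_{\gamma\delta}\Omega^{(\beta)}_{\delta\gamma}-\Omega^{(\alpha)}_{\delta\gamma}\Omega^{(\beta)}_{\gamma\delta}$. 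This fixes the candidate double-zero expression.

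The remaining and most delicate step is to verify that the potential-derivative terms reproduce exactly the cross terms $-\partial B\cdot B^2$ of these products, and that the purely quartic pieces $B^2\cdot B^2$ generated by the candidate expression cancel among themselves, so that no residual $B^4$ term (absent from $\mathcal{A}_{pqrs}$ by construction) is left over. This bookkeeping over a fairly large number of index combinations is the main obstacle. I expect to control it by using the total antisymmetry of $\mathcal{L}$ to cut down the number of independent direction pairs one must check, together with the reversal symmetry $B_{ab}\leftrightarrow B_{ba}$ that the whole construction respects; after that, matching the layers of terms becomes a finite, if lengthy, direct verification yielding $\mathcal{A}_{pqrs}=\sum\Omega^{(\cdot)}_{\cdot\cdot}\,\Omega^{(\cdot)}_{\cdot\cdot}$, and hence the claimed double zero of $\mathrm{d}\mathsf{L}$ on solutions of \eqref{eq:Beqs}.
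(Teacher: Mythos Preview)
Your proposal is correct and follows essentially the same route as the paper: a direct computation of the single coefficient $\mathcal{A}_{pqrs}$ of $\mathrm{d}\mathsf{L}$, organised so that it factors into products of the Euler--Lagrange quantities $\Omega^{(a)}_{bc}=\partial_a B_{bc}-B_{ba}B_{ac}$ (the paper's $\Gamma_{a;bc}$). The paper simply states the resulting twelve-term double-zero expression, while you have outlined the intermediate mechanism (cancellation of second derivatives from the skew kinetic terms, pairing of the surviving $\partial B\cdot\partial B$ pieces by complementary index pairs, and the matching of cross and quartic terms); this extra structure is helpful but does not constitute a different method.
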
 
\begin{proof} 
Computing the components of the differential ${\rm d}\mathsf{L}$ we obtain 
\begin{align*} 
& \partial_{\xi_s}\mathcal{L}_{pqr}-\partial_{\xi_p}\mathcal{L}_{qrs} +\partial_{\xi_q}\mathcal{L}_{rsp} 
-\partial_{\xi_r}\mathcal{L}_{spq}= \\ 
& \quad \Gamma_{s;rq}\Gamma_{p;qr} - \Gamma_{p;rq}\Gamma_{s;qr} 
+ \Gamma_{s;qp}\Gamma_{r;pq} - \Gamma_{r;qp}\Gamma_{s;pq} \\ 
&\quad +\Gamma_{s;pr}\Gamma_{q;rp} - \Gamma_{q;pr}\Gamma_{s;rp} 
+\Gamma_{q;sr}\Gamma_{p;rs} - \Gamma_{p;sr}\Gamma_{q;rs} \\ 
&\quad +\Gamma_{p;sq}\Gamma_{r;qs} - \Gamma_{r;sq}\Gamma_{p;qs} 
+\Gamma_{q;ps}\Gamma_{r;sp} - \Gamma_{r;ps}\Gamma_{q;sp}\ ,  
\end{align*} 

where 
\[ \Gamma_{p;qs}=\partial_{\xi_p}B_{qs}-B_{qp}B_{ps}\  ,  \]
and similarly for the other indices.  The set of generalised EL equations in this case 
are obtained from $\delta\mathcal{A}_{pqrs}=0$, repeating the general argument, cf. e.g. \cite{SurVerm,SNC20,Sleigh-thesis}, for deriving the EL equations from the differential 
of the Lagrangian multiform. Thus, since all the variations $\delta B_{pq}$ etc. 
and their first derivatives,  
are independent, the coefficients are precisely all the combinations 
$\Gamma_{r;pq}$, etc. which will have to vanish at the critical point for 
the action 
\begin{equation}\label{eq:action}
{\sf S}[\mathbf{B}(\boldsymbol{\xi});\mathcal{V}]=\int_{\mathcal{V}} {\sf L}= 
\int_{ \mathcal{W} }\,{\rm d}{\sf L}\  ,  
\end{equation}
integrated over any arbitrary 3-dimensional closed hypersurfaces $\mathcal{V}$ in the multivariable space of all the 
$\xi_p$'s,   and where the enclosed volume 
$\mathcal{W}$ is such that $\mathcal{V}=\partial\mathcal{W}$. 
\end{proof}

As a corollary the statement of Theorem 2.2 holds more generally for Lagrangian 3-forms 
embedded in a higher-dimensional space of independent variables, namely 

\begin{corollary}  
In a space of variables $\{ \boldsymbol{p}=(p_j)_{j\in I} \}$ where the $p_i$ denote complex valued 
continuous variables labelled by an index set $I$, the Lagrangian 3-form 
\begin{align} \label{eq:gen3form} 
\mathsf{L}= \sum_{i,j,k\in I} \mathcal{L}_{p_i,p_j,p_k}\,{\rm d}\xi_{p_i}\wedge {\rm d}\xi_{p_j} 
\wedge{\rm d}\xi_{p_k}\ , 
\end{align} 
with $\mathcal{L}_{p_i,p_j,p_k}$ as given in \eqref{eq:Lagr}, we have that 
\[ {\rm d}\mathsf{L}= \sum_{i,j,k,l\in I} \mathcal{A}_{p_i,p_j,p_k,p_l}\,
{\rm d}\xi_{p_i}\wedge {\rm d}\xi_{p_j}\wedge{\rm d}\xi_{p_k}\wedge{\rm d}\xi_{p_l}\ , \] 
has a double zero on solutions of the system \eqref{eq:Beqs} written in the relevant variables 
labelled by $p_i,p_j,p_k,p_l$. 
\end{corollary}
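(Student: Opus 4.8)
The plan is to deduce the statement directly from Theorem 2.2 by exploiting the locality of the exterior derivative. The key observation is that $\mathrm{d}$ acts on each decomposable summand of $\mathsf{L}$ separately, so that the resulting $4$-form splits into components indexed by the four-element subsets of $I$. Concretely, for each triple of indices one has
\[ \mathrm{d}\left(\mathcal{L}_{p_a,p_b,p_c}\,\mathrm{d}\xi_{p_a}\wedge\mathrm{d}\xi_{p_b}\wedge\mathrm{d}\xi_{p_c}\right)
= \sum_{d\in I}\left(\partial_{\xi_{p_d}}\mathcal{L}_{p_a,p_b,p_c}\right)\mathrm{d}\xi_{p_d}\wedge\mathrm{d}\xi_{p_a}\wedge\mathrm{d}\xi_{p_b}\wedge\mathrm{d}\xi_{p_c}\ , \]
and the summands with $d\in\{a,b,c\}$ vanish by antisymmetry of the wedge. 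Hence the coefficient of a fixed basis $4$-form $\mathrm{d}\xi_{p_i}\wedge\mathrm{d}\xi_{p_j}\wedge\mathrm{d}\xi_{p_k}\wedge\mathrm{d}\xi_{p_l}$ receives contributions only from those triples contained in $\{p_i,p_j,p_k,p_l\}$.

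First I would fix four distinct indices $i,j,k,l\in I$ and collect the four contributing terms, one for each sub-triple obtained by omitting one of the four. After reordering the wedge factors into the chosen orientation and tracking the attendant alternating signs, the coefficient $\mathcal{A}_{p_i,p_j,p_k,p_l}$ is precisely the alternating sum
\[ \partial_{\xi_{p_l}}\mathcal{L}_{p_i,p_j,p_k}-\partial_{\xi_{p_i}}\mathcal{L}_{p_j,p_k,p_l}+\partial_{\xi_{p_j}}\mathcal{L}_{p_k,p_l,p_i}-\partial_{\xi_{p_k}}\mathcal{L}_{p_l,p_i,p_j}\ , \]
which is exactly the combination computed in the proof of Theorem 2.2 under the relabelling $(p,q,r,s)\mapsto(p_i,p_j,p_k,p_l)$. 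Since each component $\mathcal{L}_{p_a,p_b,p_c}$ in \eqref{eq:Lagr} depends only on fields $B_{\cdot\cdot}$ whose indices are drawn from $\{p_a,p_b,p_c\}$ together with their first derivatives in the three directions $\xi_{p_a},\xi_{p_b},\xi_{p_c}$, and since the remaining derivative $\partial_{\xi_{p_d}}$ is evaluated using the Darboux equations \eqref{eq:Beqs}, the whole computation closes within the four variables $\xi_{p_i},\xi_{p_j},\xi_{p_k},\xi_{p_l}$; the dependence of the fields on any other variable $\xi_{p_m}$ with $m\notin\{i,j,k,l\}$ never enters this particular coefficient.

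It then follows immediately from Theorem 2.2 that each $\mathcal{A}_{p_i,p_j,p_k,p_l}$ is a sum of products of the factors $\Gamma_{p;qs}=\partial_{\xi_p}B_{qs}-B_{qp}B_{ps}$, with indices drawn from the four-element subset, each of which vanishes on solutions of the generalised Darboux system. Because this double-zero representation holds for every four-element subset at once, the full $4$-form $\mathrm{d}\mathsf{L}$ inherits the double zero, and the generalised Euler-Lagrange equations obtained from $\delta\mathcal{A}_{p_i,p_j,p_k,p_l}=0$ reproduce \eqref{eq:Beqs} written in the relevant variables.

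I do not expect a genuine analytic obstacle here, since the entire content is already contained in the four-variable computation of Theorem 2.2; the reduction is essentially a statement about the locality of $\mathrm{d}$. The only point demanding care is the combinatorial bookkeeping: verifying that, for each four-element subset, exactly four triples contribute and that the signs arising from reordering the wedge factors assemble into the alternating sum displayed above rather than some other combination. A secondary check is to confirm that substituting the Darboux equations for $\partial_{\xi_{p_d}}B_{\cdot\cdot}$ introduces only fields $B_{\cdot p_d}$ and $B_{p_d\cdot}$ carrying the omitted fourth index, so that the coefficient is expressed entirely through the $B$-fields and $\Gamma$-factors of the four-variable subsystem, matching Theorem 2.2 verbatim.
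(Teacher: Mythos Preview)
Your proposal is correct and takes essentially the same approach as the paper: the paper's proof is a single sentence stating that the result is ``an obvious extension of the one for Theorem 2.2, assuming that all labels of the $p_{i_\nu}$ are distinct,'' and what you have written is precisely the unpacking of why this extension is obvious, via the locality of the exterior derivative and the fact that each coefficient $\mathcal{A}_{p_i,p_j,p_k,p_l}$ coincides with the four-variable expression already computed.
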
 
\noindent The proof is an obvious extension of the one for Theorem 2.2, assuming that all labels 
of the $p_{i_\nu}$ are distinct.

The variational equations obtained from  $\delta{\rm d}{\sf L}=0$ 
for the Lagrangian multiform \eqref{eq:Lagr3form} 
constitute the set of multiform Euler-Lagrange equations in the language  
of the variational bicomplex, cf. e.g. \cite{Dickey-book}. In fact, considering all the fields 
$B_{\cdot\cdot}$ independent for different labels, this compact form of the variational 
equations, through the double-zero form, implies the vanishing of all the 
factors $\Gamma$ in the above computation, which amounts to the set of generalised Darboux 
equations. Furthermore, as a direct consequence of the double-zero form for ${\rm d}{\sf L}$ the 
Lagrangian multiform ${\sf L}$ is closed on solutions of the Darboux system 
\eqref{eq:Beqs}, i.e. $\left.{\rm d}{\sf L}\right|_{\rm EL}=0$ 
(but not trivially so, only `on-shell'), which implies that 
for the critical fields which obey the Darboux system the action is invariant 
under smooth deformations of the hypersurface $\mathcal{V}$. This is precisely the 
phenomenon of multidimensional consistency: the Darboux system is compatible 
on any hypersurface in the multidimensional space of Miwa variables.  

It is important to mention in this context that multi-time Euler-Lagrange equations 
were derived in various papers, notably \cite{Y-KLN,Suris2012} in the case of Lagrangian 1-forms, 
\cite{Suris2016,SurVerm}, continuous 2-forms (and in \cite{BolPetSur2013b,LobbNij2018} in the 
discrete case) and, more generally, in \cite{Verm,Sleigh-thesis} in the general continuous case. 
These derivations follow different approaches, but they have in common that the conventional 
EL equations associated with a specific choice (meaning in the present context fixing $\mathcal{V}$ 
in \eqref{eq:action}) of integration manifold (in the 
multi-time space of independent variables) must hold simultaneously for all possible choices of 
$\mathcal{V}$, and are supplemented by a set of additional constraints linking the Lagrangian 
components arising from `alien derivatives' (i.e. derivatives w.r.t. independent variables 
that are not integrated over in the action functional 
${\sf S}[\mathbf{B}(\boldsymbol{\xi});\mathcal{V}]$ 
along the direction of the components in question of the Lagrangian multiform $\mathsf{L}$). 
I don't intend to write down the general formulae here, as they require a lot of additional 
notations, cf. e.g. \cite{SurVerm,Verm,Sleigh-thesis} for a detailed presentation. 
In fact, in \cite{Sleigh-thesis} the double-zero condition 
is shown to be a sufficient condition for the multi-time EL equations to be satisfied,  
and this is all we need for the purpose of the present paper.

It may also need pointing out that in the case that the dimension of the embedding space of independent variables coincides with 
the number of variables in the system \eqref{eq:Beqs}, i.e. 
when the cardinality of the index set $|I|=3$ in \eqref{eq:gen3form},  we recover the conventional case of 
a Lagrangian volume-form with Lagrangian density $\mathcal{L}_{pqr}= \mathcal{L}_{p_i,p_j,p_k}$, in which case the standard Euler-Lagrange equations yield
\begin{align}\label{eq:standL}
\frac{\delta\mathcal{L}_{pqr}}{\delta B_{pq}}=-\frac{\partial B_{qp}}{\partial \xi_r}+B_{qr}B_{rp} \ , 
\quad 
\frac{\delta\mathcal{L}_{pqr}}{\delta B_{qp}}=\frac{\partial B_{pq}}{\partial \xi_r}+B_{pr}B_{rq} \ ,
\end{align}  
and similarly for the components $B_{qr}$, $B_{rq}$, $B_{pr}$ and $B_{rp}$. Thus, we obtain the 
Darboux system \eqref{eq:Beqs}, but the integrability in the sense of MDC is not evident from the 
conventional Lagrangian formalism.

Another corollary of the multiform structure is that we also have a variational 
description of the Lax system \eqref{eq:Lax}. To see this, note that we can extend the set of Miwa variables to include variables $\xi_k$ 
associated with a `spectral parameter' $k$. Thus, we are led to the following statement. 

\begin{corollary}
The Lagrangian 3-form 
\begin{align}\label{eq:LaxLagr3form} 
{\sf L}_{(k)}:= &\mathcal{L}_{pq(k)}\,{\rm d}\xi_p\wedge{\rm d}\xi_q\wedge{\rm d}\xi_k+ 
\mathcal{L}_{qr(k)}\,{\rm d}\xi_q\wedge{\rm d}\xi_r\wedge{\rm d}\xi_k+ \nonumber \\ 
& \quad +\mathcal{L}_{rp(k)}\,{\rm d}\xi_r\wedge{\rm d}\xi_p\wedge{\rm d}\xi_k+ 
\mathcal{L}_{pqr}\,{\rm d}\xi_p\wedge{\rm d}\xi_q\wedge{\rm d}\xi_r\ ,  
\end{align} 
with components   
\begin{align}\label{eq:LaxLagr}
\mathcal{L}_{pq(k)} = & \tfrac{1}{2}\left(\Psi_{q}\partial_{\xi_p}\Phi_{q}-(\partial_{\xi_p}\Psi_{q})\Phi_q \right)  
-\tfrac{1}{2}\left(\Psi_{p}\partial_{\xi_q}\Phi_{p}-(\partial_{\xi_q}\Psi_{p})\Phi_p \right) \nonumber \\ 
& +\tfrac{1}{2}\left(B_{qp}\partial_{\xi_k}B_{pq}-B_{pq}\partial_{\xi_k}B_{qp} \right)  
   + \Psi_{p} B_{pq} \Phi_{q} - \Psi_{q}B_{qp} \Phi_{p}  \  , 
\end{align} 
through the EL equations $\delta{\rm d}\mathsf{L}_{(k)}=0$ constitutes a variational description of the 
Lax multiplet \eqref{eq:Lax}.  
\end{corollary}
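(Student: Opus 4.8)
The plan is to recognise the Lax 3-form $\mathsf{L}_{(k)}$ as nothing but the generalised Darboux 3-form of Corollary 2.1, specialised to the four-label index set $I=\{p,q,r,k\}$ in which $k$ is the adjoined spectral direction, and then to transfer the double-zero property verbatim. First I would make the identification $\Phi_a=B_{ak}$ and $\Psi_a=B_{ka}$ for each physical label $a\in\{p,q,r\}$; this is legitimate because in the Darboux picture all $B_{\cdot\cdot}$ with distinct labels are independent fields and, since coinciding-index quantities are excluded, no $B_{kk}$ ever arises. Under this substitution the four components appearing in \eqref{eq:LaxLagr3form} are precisely the four triangular faces $\{p,q,k\},\{q,r,k\},\{r,p,k\},\{p,q,r\}$ of the tetrahedron on $\{p,q,r,k\}$: for a face containing $k$, the two kinetic pairs built from the $k$-direction become the $\Phi,\Psi$ kinetic terms of \eqref{eq:LaxLagr}, the remaining pair stays as the $B$-kinetic term $\tfrac{1}{2}(B_{qp}\partial_{\xi_k}B_{pq}-B_{pq}\partial_{\xi_k}B_{qp})$, and the cubic pair $B_{kp}B_{pq}B_{qk}-B_{kq}B_{qp}B_{pk}$ reproduces the bilinear term $\Psi_pB_{pq}\Phi_q-\Psi_qB_{qp}\Phi_p$.

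Once this matching is in place, I would simply invoke Corollary 2.1: $\mathrm{d}\mathsf{L}_{(k)}$ inherits the double zero, with the coefficient a sum of products of the relabelled factors $\Gamma_{a;bk}=\partial_{\xi_a}B_{bk}-B_{ba}B_{ak}$ and $\Gamma_{a;kb}=\partial_{\xi_a}B_{kb}-B_{ka}B_{ab}$. Translating back, $\Gamma_{a;bk}=\partial_{\xi_a}\Phi_b-B_{ba}\Phi_a$ is exactly the Lax equation of \eqref{eq:Lax}, while $\Gamma_{a;kb}=\partial_{\xi_a}\Psi_b-\Psi_aB_{ab}$ is its adjoint; the factors carrying no $k$-index are the base Darboux equations \eqref{eq:Beqs}, which are the compatibility conditions of the Lax multiplet already noted in Proposition 2.1. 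By the double-zero criterion of \cite{Sleigh-thesis}, the multiform Euler-Lagrange equations $\delta\mathrm{d}\mathsf{L}_{(k)}=0$ force every such factor to vanish, so that varying the linear fields $\Phi,\Psi$ returns the Lax multiplet \eqref{eq:Lax} while varying the remaining $B_{\cdot\cdot}$ returns the system \eqref{eq:Beqs} on which it is built.

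The step I expect to require the most care is not the double-zero computation, which is wholly inherited from Corollary 2.1, but the bookkeeping of the antisymmetric kinetic terms under the relabelling. In \eqref{eq:Lagr} the three kinetic pairs carry signs dictated by the cyclic orientation $p\to q\to r\to p$, so the two $\Phi,\Psi$-pairs in \eqref{eq:LaxLagr} enter with opposite relative sign (the $\partial_{\xi_p}$-pair positively, the $\partial_{\xi_q}$-pair negatively); I would check explicitly that this relative sign is exactly the one produced by orienting the face $\{p,q,k\}$ and substituting $B_{ak}=\Phi_a$, $B_{ka}=\Psi_a$, and likewise for the other two faces through the tetrahedron. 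Confirming that the transposition of the hidden $k$-index, which interchanges $\Phi\leftrightarrow\Psi$, acts consistently on all four faces is the only genuinely new verification; everything else reduces to the already-established structure.
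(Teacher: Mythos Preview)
Your approach is correct and is precisely the one the paper intends: the corollary is stated without a separate proof because it is meant to follow directly from Theorem~2.2/Corollary~2.1 by adjoining the spectral direction $\xi_k$ and using the identification $\Phi_a=B_{ak}$, $\Psi_a=B_{ka}$ already noted after Proposition~2.1. Your term-by-term check that $\mathcal{L}_{pq(k)}$ is exactly $\mathcal{L}_{pqk}$ under this substitution, and your reading of the resulting $\Gamma$-factors as the Lax and adjoint Lax equations, is the intended argument spelled out in full.
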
 

A similar variational description of the Lax system was obtained 
in \cite{SNC19} for the 1+1-dimensional Lax system associated with the 
so-called Zakharov-Mikhailov action. We note that the Lagrangian 3-form \eqref{eq:LaxLagr3form} 
should really be considered as a Lagrangian 2-form when integrating out the direction $\xi_k$ 
associated with the (fixed) spectral variable. 

\section{Discrete Darboux system} 

A discrete analogue of the Darboux system of orthogonal coordinate systems, i.e. 
was found in  \cite{BogdKonopel95,DS97}  where its integrability was inserted. In fact, an 
interesting connection with integrable quadrilateral lattices was discovered, as 
well as with multidimensional circular lattices, cf. \cite{CDS97,MDS97}.  
The corresponding discrete analogue of the generalised Darboux system \eqref{eq:Beqs} reads
\bse\label{eq:dBeqs}\begin{align}
& \Delta_p B_{qr}=B_{qp}T_pB_{pr}\  , \quad 
\Delta_p B_{rq}=B_{rp}T_pB_{pq}\  , \label{eq:dBeqsa} \\
&\Delta_q B_{rp}=B_{rq}T_qB_{qp}\  ,
\quad \Delta_q B_{pr}=B_{pq}T_qB_{qr}\  , \label{eq:dBeqsb}\\
&\Delta_r B_{pq}=B_{pr}T_rB_{rq}\  , \quad 
\Delta_r B_{qp}=B_{qr}T_rB_{rp}\  , \label{eq:dBeqsc}
\end{align} \ese
where $\Delta_p$ denotes the difference operator $\Delta_p=T_p-{\rm id}$, and 
where $T_p$ is a shift operator in a discrete variable $n$ associated with the 
`lattice parameter' $p$ (see next section for the relation between $p$ and $n$). 
This system is related to other 
multidimensional lattice systems that were formulated in \cite{Nij85LMP}. 
\begin{theorem}
The system of difference equations \eqref{eq:dBeqs} is multidimensionally consistent, and furthermore, it is consistent 
with the differential system \eqref{eq:Beqs}.
\end{theorem}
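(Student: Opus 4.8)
The plan is to treat both assertions as compatibility statements established by direct computation, exactly parallel to the proofs of Theorem 2.1 and Proposition 2.3, the one genuinely new feature being that the difference equations \eqref{eq:dBeqs} are \emph{implicit}: each cross-shift is expressed through a self-shift, since $T_pB_{qr}=B_{qr}+B_{qp}\,T_pB_{pr}$ involves $T_pB_{pr}$, the shift of $B_{pr}$ along the direction $p$ carried by one of its own indices, which is not itself fixed by \eqref{eq:dBeqs}. For the first assertion I would adjoin a fourth lattice direction with parameter $s$ and shift operator $T_s$, together with the fields $B_{ps},B_{sp},\dots$ and the attendant copies of \eqref{eq:dBeqs}, e.g. $\Delta_sB_{qr}=B_{qs}T_sB_{sr}$, and then verify that the shifts commute, $T_sT_pB_{qr}=T_pT_sB_{qr}$, together with all index variants.

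Carrying this out, I would expand $T_sT_pB_{qr}$ and $T_pT_sB_{qr}$ using \eqref{eq:dBeqs} in the two directions. After cancelling the manifestly common terms, the equality reduces to a single relation for the doubly shifted self-shift $T_sT_pB_{pr}$, namely
\[ \left(T_sT_pB_{pr}\right)\bigl[1-(T_pB_{ps})(T_sB_{sp})\bigr]=T_pB_{pr}+(T_pB_{ps})(T_sB_{sr})\  . \]
This is precisely the closure relation that $T_sT_pB_{pr}$ must obey if the shifts on the index-$p$ field themselves commute; solving it and feeding the result back shows that commutativity on every cross-field $B_{qr}$ follows automatically from this one consistency condition, whence multidimensional consistency. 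Equivalently, one may organise the whole computation through a discrete Lax multiplet, e.g. $\Delta_p\Phi_q=B_{qp}\,T_p\Phi_p$ (the discrete counterpart of \eqref{eq:Lax}, obtained by setting $\Phi_q=B_{qk}$) together with its adjoint, whose compatibility on the extended lattice is the commutativity of the $T$'s.

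For the second assertion I would let the field depend additionally on the continuous variables and impose \eqref{eq:Beqs}, then verify the mixed compatibility of a lattice shift with a derivative in a distinct direction, $T_p\partial_{\xi_s}B_{qr}=\partial_{\xi_s}T_pB_{qr}$ with $p,q,r,s$ distinct. Using $\partial_{\xi_s}B_{qr}=B_{qs}B_{sr}$ and $\partial_{\xi_s}(T_pB_{pr})=T_p(B_{ps}B_{sr})=(T_pB_{ps})(T_pB_{sr})$ on one side, and $T_pB_{qs}$, $T_pB_{sr}$ from \eqref{eq:dBeqs} on the other, both orderings collapse to the same polynomial
\[ B_{qs}B_{sr}+B_{qs}B_{sp}(T_pB_{pr})+B_{qp}(T_pB_{ps})B_{sr}+B_{qp}(T_pB_{ps})B_{sp}(T_pB_{pr})\  , \]
in which the self-shift $T_pB_{ps}$ enters symmetrically, so the two sides coincide identically. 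This exhibits the discrete system as a B\"acklund-type auto-transformation of the continuous Darboux system, the shift $T_p$ intertwining the continuous flows, with the continuum limit linking the $p$-lattice and $\xi_p$ directions (cf. the next section).

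The main obstacle is exactly the bookkeeping of the self-shifts $T_cB_{cb}$. Unlike the continuous case, where $\partial_{\xi_c}B_{qr}=B_{qc}B_{cr}$ has a purely polynomial right-hand side and no self-derivatives ever appear, the implicit form of \eqref{eq:dBeqs} forces these genuinely new discrete flows into every intermediate step. The point to get right is that they never obstruct commutativity: for discrete--discrete consistency they are constrained by the solvable closure relation displayed above, and for discrete--continuous consistency they occur in matching symmetric positions on the two sides, so in both cases the verification closes.
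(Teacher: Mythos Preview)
Your plan is correct and, at the level of ``proof by direct computation'', matches the paper's. The two arguments are organised differently, though, and it is worth noting how.

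For the discrete--discrete part the paper works \emph{backwards}: it writes $B_{qr}=T_pB_{qr}-B_{qp}T_pB_{pr}$, iterates once more in the $s$-direction to express $B_{qr}$ in terms of $T_pT_s$-shifted quantities, does the same with $p$ and $s$ interchanged, and then---assuming the shifts commute and treating $T_pT_sB_{pr}$ and $T_pT_sB_{sr}$ as independent---reads off the coefficients, which are precisely the discrete Darboux equations $\Delta_pB_{qs}=B_{qp}T_pB_{ps}$ and $\Delta_sB_{qp}=B_{qs}T_sB_{sp}$. Your forward computation arrives at the same place; your ``closure relation'' is exactly what one gets by eliminating $T_pT_sB_{sr}$ between those two shifted Darboux equations. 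So your relation is not an extra condition to be verified but a consequence of the system itself---the paper's phrasing makes this clearer, while yours makes the role of the undetermined self-shifts $T_pB_{p\cdot}$ more explicit (a genuine structural point the paper glosses over).

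For the discrete--continuous part you and the paper do the same check with $p$ and $s$ interchanged. Your displayed four-term polynomial is correct and both orderings indeed collapse to it; the paper instead cancels the $B_{qs}$-terms first and reduces the remaining $B_{qp}$-terms to the single discrete Darboux relation $\Delta_sB_{pr}=B_{ps}T_sB_{sr}$, which is a slightly quicker route to the same conclusion.
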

\begin{proof}
The consistency of the set of difference equations is by direct computation. For instance, rewriting the difference equation as follows
\begin{align*} 
& B_{qr}= T_pB_{qr}-B_{qp}T_pB_{pr}= \\ 
& = T_p\left(T_sB_{qr}-B_{qs}T_sB_{sr}\right)-
B_{qp}T_p\left(T_sB_{pr}-B_{ps}T_sB_{sr}\right) \\ 
& =T_pT_sB_{qr}-(T_pB_{qs})T_pT_sB_{sr}-B_{qp}T_pT_sB_{pr} 
+B_{qp}(T_pB_{ps})T_pT_sB_{sr}  
\end{align*} 
which is equal to the same expression with the labels $p$ and $s$ interchanged.  Thus, the latter is equal to 
\[ =T_sT_pB_{qr}-(T_sB_{qp})T_sT_pB_{pr}-B_{qs}T_sT_pB_{sr} 
+B_{qs}(T_sB_{sp})T_sT_pB_{pr}\ .  \]
Assuming that the shifts $T_p$ and $T_s$ commute, and collecting the factors with $T_pT_sB_{pr}$ and the ones with 
$T_pT_sB_{pr}$, regarding the latter as independent, we obtain the relations 
\[ T_pB_{qs}-B_{qs}-B_{qp}T_pB_{ps}=0 \ , \quad {\rm and}\quad T_sB_{qp}-B_{qp}-B_{qs}T_sB_{sp}=0\  ,  \]  
which are two of the discrete Darboux equations. Thus, the relations are consistent under mutual shifts. The compatibility with the continuous Darboux system \eqref{eq:Beqs} follows from a 
similar computation. Abbreviating $\partial/\partial\xi_p$ by $\partial_p$ we get 
\begin{align*} 
& \partial_p (T_sB_{qr})= \partial_p\left(B_{qr}+B_{qs}T_sB_{sr} \right)=\partial_pB_{qr}+(\partial_pB_{qs})T_sB_{sr}+ B_{qs}\partial_pT_sB_{sr}\\ 
& {\rm whereas} \\ 
& T_s\partial_p B_{qr} = T_s(B_{qp}B_{pr}) =(T_sB_{qp})T_sB_{pr}=(B_{qp}+B_{qs}T_sB_{sp})T_sB_{pr} \\ 
&\Rightarrow \quad  B_{qp}B_{pr}+B_{qp}B_{ps}T_sB_{sr}+\cancel{B_{qs}T_s\left(B_{sp}B_{pr}\right)} \\ 
&\qquad\quad =B_{qp} T_sB_{pr}+\cancel{B_{qs}(T_sB_{sp})T_sB_{pr}}\ , 
\end{align*} 
and the remaining terms cancel as well due to the discrete Darboux relation. 

\end{proof}

Similarly to the continuous case we have a Lax system, and its adjoint, given by 
\begin{equation}
\Delta_p\Phi_q=B_{qp}T_p\Phi_p\ , \quad \Delta_p\Psi_q=\Psi_p T_pB_{pq}\  , 
\end{equation} 
and the homogeneous linear difference system for an eigenfunctions $\Phi_r$, $\Psi_r$ respectively,   
\begin{align}
& \Delta_p\Delta_q \Phi_r =\frac{\Delta_p(T_q\Phi_q)}{T_q\Phi_q}\,\Delta_q\Phi_r 
+\frac{\Delta_q(T_p\Phi_p)}{T_p\Phi_p}\, \Delta_p\Phi_r\ ,     \\ 
&  \Delta_p\Delta_q \Psi_r =\frac{\Delta_p\Psi_q}{T_p\Psi_q}\,\Delta_q(T_p\Psi_r) 
+\frac{\Delta_q\Psi_p}{T_q\Psi_p}\,\Delta_p(T_q\Psi_r)\ . 
\end{align}
Note that in the discrete case the equations for the eigenfunction and its adjoint are no 
longer the same. It is natural to assume that the discrete Darboux system \eqref{eq:dBeqs}, like its 
continuous counterpart \eqref{eq:Beqs}, admits a Lagrangian 3-form structure. I intend to settle 
this question in a future publication \cite{Nijtbp}.

\section{Connection with the (scalar) KP system} 

The KP system of equations is often 
introduced as the set of Lax equations arising from a Lax operator in a ring of 
pseudo-differential operators with respect to a singled-out variable $x$, cf. 
\cite{Sato}. This 
has a disadvantage that the inherent covariant structure of the KP system is 
broken, and not all independent variables (the higher time variables) appear on the 
same footing as the variable $x$. A more covariant approach is provided by the 
`direct linearisation' set-up, cf. e.g. \cite{FN17} and references therein, 
where there is no need to single out a particular variable to describe the KP 
hierarchy. It can be argued that the generalised Darboux 
system \eqref{eq:Beqs} provides also a covariant description but in the sense of 
encoding the hierarchy through Miwa variables, cf. \cite{MartinezKonop}. In that 
sense the generalised Darboux system is similar in spirit as the `hierarchy generating PDEs' of 
\cite{NHJ,TN}, but for a 3-dimensional system of PDEs instead of the KdV or 
Boussinesq hierarchies respectively. 

Solutions of the discrete KP system were considered in \cite{NCWQ84,NC90,FN16} using the \textit{direct linearisation} (DL)  
approach, cf. also \cite{SAF1984}. The dynamics is governed 
by \textit{plane-wave factors} which take the form  
\bse\label{eq:freeref}\begin{align}
& \rho_k= \left[\prod_\nu (p_\nu-k)^{n_\nu}\right]\,\exp\left\{k\xi-\sum_\nu \frac{\xi_{p_\nu}}{p_\nu-k}  \right\}\ , \\  
& \sigma_{k'}= \left[\prod_\nu (p_\nu-k')^{-n_\nu}\right]\,\exp\left\{-k'\xi+\sum_\nu \frac{\xi_{p_\nu}}{p_\nu-k'}  \right\}\ .
\end{align}
\ese 
Here the $\xi_{p_\nu}$ are the independent variables of the generalised 
Darboux system, and the $n_\nu$ are associated discrete variables, in 
terms of which the KP $\tau$ function obeys the compatible set Hirota bilinear equations\footnote{In the 
literature, cf. e.g. \cite{Doliwa-etal,MartinezKonop}, the dependence on discrete shifts in what is 
essentially the \emph{scalar} KP system, is confusingly often referred to as the 
`multi-component KP hierarchy', (because lattice-shifted variables are considered as components).  
This should not be confused with the \emph{matrix KP system}, cf. e.g. \cite{Konop82,Nij85LMP}, 
which in my opinion more rightfully deserves the name `multicomponent', and which is related to the 
system in section 5. The difference between the two resides in that the scalar KP is governed 
essentially by a scalar integral measure in the underlying DL framework, while the matrix KP is 
governed by a matrix measure, and hence has a much richer solution  structure.} 
\begin{equation}\label{eq:Hirota} 
(p-q)(T_pT_q\tau)T_r\tau+ (q-r)(T_qT_r\tau)T_p\tau+(r-p)(T_rT_p\tau)T_q\tau=0\ , 
\end{equation}
where $T_{p_\nu}$ ($p,q,r$ being any three of the $p_\nu$) denotes the elementary shift in the variable $n_\nu$ 
associated with $p_\nu$ (which, in this context, has the interpretation 
of a \textit{lattice parameter} measuring the grid width in the 
discrete direction labelled by $n_\nu$). 

The interplay between discrete 
and continuous variables turns out to be an essential feature of the 
structure. In fact, the $\tau$-function obeys the relations 
\begin{equation}\label{eq:taurel}
\frac{\partial\tau}{\partial\xi_p}=- \left(T_p^{-1}\frac{{\rm d}}{{\rm d}p}T_p\right)\tau:= 
\lim_{\varepsilon\to 0} \frac{T_p^{-1}T_{p-\varepsilon}\tau-\tau}{\varepsilon}\  , 
\end{equation}
for any of the parameters $p_\nu=p$, where we should think of the 
$p-\varepsilon$ as the lattice parameter associated with 
  lattice directions with elementary lattice shift $T_{p-\varepsilon}$ for 
all arbitrary small $\varepsilon$ .
Using the identification between lattice shifts and derivatives as in \eqref{eq:taurel}, we can 
perform a limit $r\to p$ on \eqref{eq:Hirota} and thus obtain the following 
differential-difference equation for $\tau$
\begin{equation}\label{eq:diffdifftau} 
(p-q) \left(\tau\,T_q\frac{\partial\tau}{\partial \xi_p}-(T_q\tau)\,\frac{\partial\tau}{\partial \xi_p}\right) 
=\tau\,T_q\tau - (T_p\tau)\,T_qT_p^{-1}\tau\  . 
\end{equation}  
Furthermore, the $\tau$-function also obeys the differential-difference equation
\begin{equation}\label{eq:taudiff}
1+(p-q)^2 \frac{\partial^2\ln\tau}{\partial \xi_p\,\partial \xi_q}=\frac{(T_pT_q^{-1}\tau)T_qT_p^{-1}\tau}{\tau^2}\ , 
\end{equation} 
which can be readily cast into bilinear form. In fact, eq. \eqref{eq:taudiff} is the bilinear form of the 
2D Toda equation (with the discrete variable along the skew-diagonal lattice direction in the lattice 
generated by the $T_p$ and $T_q$ shifts).    

It turns out that the Darboux variables of the 
system \eqref{eq:Beqs} can be expressed in terms of the KP $\tau$-function exploiting the underlying discrete structure\footnote{In \cite{MartinezKonop,Doliwa-etal} a similar connection was exhibited, which differs 
from the present one in that my presentation is based on results from the DL approach, 
whereas the Sato or fermionic type approach seems to cover a more restricted solution sector of 
the theory.}. To do so 
consider the quantities 
\begin{equation}\label{eq:S} 
S_{a,b}= \frac{T_a^{-1}T_b\tau}{\tau} \  , 
\end{equation}
as a consequence of \eqref{eq:Hirota} and \eqref{eq:diffdifftau} obey the following 
relations 
\bse\label{eq:Srels}
\begin{align} 
& (p-b)T_pS_{a,b}-(p-a)S_{a,b}=(a-b) S_{a,p}T_pS_{p,b}\  , \label{eq:Sdrels} \\ 
&(p-a)(p-b)\frac{\partial S_{a,b}}{\partial\xi_p}= 
(a-b)\left(S_{a,p}S_{p,b}-S_{a,b}\right)\ .  \label{eq:Screls} 
\end{align} 
\ese 
Similar relations appeared in \cite{MartinezKonop,Doliwa-etal} derived from a 
different perspective from \cite{FN16}.  
These relations are compatible for all parameters $p$ and corresponding 
shifts and derivatives w.r.t. the corresponding Miwa variables $\xi_p$. 
They form the basis for the generalised Darboux and a discrete analogue 
of the Darboux system, where the latter can be obtained 
by a gauge transformation with factors $\rho_{-a}\sigma_{b}$ of the 
form \eqref{eq:freeref}. Furthermore, the quantity $S=S_{a,b}$ 
obeys the following three-dimensional partial difference equation, \cite{NCWQ84},  
\begin{align} 
& \frac{\left[(p-b)T_pT_qS-(p-a)T_qS\right]\, \left[(q-b)T_qT_rS-(q-a)T_rS\right]} 
{\left[(p-b)T_pT_rS-(p-a)T_rS\right]\, \left[(q-b)T_pT_qS-(q-a)T_pS\right]} \nonumber \\ 
& \qquad \times\frac{\left[(r-b)T_pT_rS-(r-a)T_pS\right]} {\left[(r-b)T_qT_rS-(r-a)T_qS\right]}=1 
 \end{align} 
which is essentially the lattice Schwarzian KP equation, first given in its well-known 
pure form in \cite{DN91}. 

The KP hierarchy can be obtained by the expansions 
\begin{align}\label{eq:Miwa} 
& t_j=\delta_{j,1}\xi+\sum_{\nu}\left(\frac{\xi_{p_\nu}}{p_\nu^{j+1}} + \frac{1}{j}\,\frac{n_\nu}{p_\nu^j}\right) \nonumber \\ 
& \Rightarrow\quad T_{p_\nu}\tau=\tau\left(\{t_j+\frac{1}{jp_\nu^j}\}\right)\quad {\rm and}\quad 
\frac{\partial\tau}{\partial\xi_{p_\nu}}=\sum_{j=1}^\infty \frac{1}{p_\nu^{j+1}}\frac{\partial\tau }{\partial t_j}\  ,  
\end{align} 
where the $t_j$ are the usual independent time-variables in the hierarchy.

From \eqref{eq:Screls} it follows that the Darboux quantities can be 
identified as 
\begin{equation}\label{eq:B} 
B_{pq}= \frac{\sigma_{p}\rho_{q} S_{p,q}}{q-p}= \sigma_{p}\rho_{q} \frac{T_p^{-1}T_q\tau}{(q-p)\tau} \  , \quad q\neq p\ , 
\end{equation}
from which, together with \eqref{eq:freeref} we get the relations \eqref{eq:Beqs} whenever $q\neq p$. 
When $q=p$, we have $B_{pp}=\mathcal{C}\partial_{\xi_p}(\ln\tau)$, where $\mathcal{C}$ is some constant normalisation factor.  

The eigenfunctions of the Lax multiplet are obtained from 
\begin{equation}\label{eq:eigen} 
\phi_a(k)=\frac{S_{a,k}\rho_k}{a-k}\ , \quad 
\psi_b(k')=\frac{S_{k',b}\sigma_{k'}}{b-k'}\  ,  
\end{equation} 
which obey the set of relations 
\bse\label{eq:ukrels}\begin{align}
& (p-a) \phi_a(k)=T_p\phi_a(k)-S_{a,p}T_p\phi_p(k)\ , \\ 
& (p-b)T_p\psi_b(k')=\psi_b(k')-\psi_{p}(k')T_pS_{p,b}\ , \\ 
& (a-p)\frac{\partial}{\partial\xi_p}\phi_a(k) =\phi_a(k) 
-S_{a,p}\phi_p(k)\ , \\ 
& -(b-p)\frac{\partial}{\partial\xi_p} \psi_b(k') 
= \psi_b(k') -\psi_{p}(k')S_{p,b}\  ,  
\end{align}\ese 
the compatibility conditions of which reproduce eqs. \eqref{eq:Srels}.

Within the setting of the DL approach, the following combination of the quantities $S$, 
for arbitrary values of $c$, possesses a quadratic eigenfunction expansion of the form 
\begin{equation}\label{eq:quadexps}
 S_{a,b}-S_{a,c}S_{c,b} = \ddint (l-l') 
\frac{(c-a)(c-b)}{(c-l)(c-l')}\,\phi_a(l)\psi_{b}(l')\  .    
\end{equation}
Here the integration is over an arbitrary measure in a region $D\subset 
\mathcal{C}\times\mathcal{C}$ of values of $l,l'$ in a spectral space 
with measure ${\rm d}\zeta(l,l')$. Under special conditions 
these integrals correspond to the generalised Cauchy integrals arising 
in the $\bar{\partial}$ problem or nonlocal Riemann-Hilbert problems 
for the KP type spectral problems, cf. \cite{ZM85,Konop}. (The choices of $c$ must 
be such that singularities in the integrals are avoided, and this requires some 
conditions on the integrations, which play a role when we consider special solutions.
We will not address these issues of analysis here.)  
Note that when $c=p_\nu$, i.e. coincides with  any of the parameters 
associated with the Miwa variables $\xi_p$ , then the left hand side 
of \eqref{eq:quadexps} coincides with the expression on the right-hand 
side of \eqref{eq:Srels}. I.o.w. the right-hand side of \eqref{eq:quadexps} provides a quadratic eigenfunction expansion for 
the derivative of $S_{a,b}$ w.r.t. $\xi_p$ (modulo a constant factor).  
However \eqref{eq:quadexps} is independent of the choice of Miwa variables and holds for any $c$. In particular in the limit 
$c\to\infty$ we obtain the following fundamental bilinear identity for the solution of the $\tau$ function associated with the 
choice of measure and integration region $D$:   
\begin{align}\label{eq:fundtau}
& \ddint \frac{(l-l')\rho_l\sigma_{l'}}{(a-l')(a'-l)}(T_{a'}^{-1}T_{l}\tau)(T_{l'}^{-1}T_{a}\tau)= \nonumber \\ 
& \qquad = \tau\,(T_{a'}^{-1}T_{a}\tau)-(T_{a'}^{-1}\tau)(T_{a}\tau)\  , 
\end{align} 
which can be considered as a \textit{bilinear integro-difference equation} for the $\tau$ function.  
The relation \eqref{eq:fundtau} is reminiscent of the fundamental bilinear 
identity that plays a central role in the Sato approach to the KP hierarchy, cf. also 
\cite{BogdKonopel}, which is however not the approach taken here to derive this relation. It is maybe useful to mention 
at this juncture that, while all definitions of a $\tau$-function are in a sense non-universal and depend on the 
solution class under consideration, what may be the most general definition of a $\tau$-function was formulated in   
\cite{Nij-tau85}, namely in terms of a fermionic path integral associated with the direct linearising transform structure.

\section{Generalisation to the matrix case} 

In a talk at the June 1987 NEEDS meeting I presented a 2+1-dimensional Lagrangian matrix 
KP system which effectively amounts to a matrix generalisation of the Darboux system, that became a focus of attention in the mid 1990s.  
We proposed the following Lagrangian, cf. \cite{NijMaill},  
\begin{align}\label{eq:GLagr} 
\mathcal{L}_{ijk} & = \tfrac{1}{2}\textrm{tr}\left\{ G_{ij}J_i(\partial_k G_{ji})J_j-(\partial_k G_{ij})J_i G_{ji}J_j +\textrm{cycl. (ijk)} \right\}  \nonumber \\ 
& \quad - \textrm{tr}\left\{G_{ij} J_i G_{ki} J_k G_{jk} J_j 
-G_{ji} J_j G_{kj} J_k G_{ik} J_i \right\}\ ,  
\end{align} 
which is a matrix generalisation of \eqref{eq:Lagr}. In fact, the $G_{ij}$ are 
$N\times N$ matrix functions of dynamical variables $x_i=\xi^{J_i}_{l_i},x_j=\xi^{J_j}_{l_j},
x_k=\xi^{J_k}_{l_k}, \dots $, which are labelled not only by a continuous parameter $l_{\cdot}$ 
(like the $p,q,r$ in the scalar case), but also by a matrix $J_{\cdot}$ which in a sense 'tunes' 
a hierarchy of associated KP type equations.   
while the $J_i,J_j,J_k$ are constant $N\times N$ 
matrices, which commute among themselves\footnote{In fact, one can also 
consider the non-commutative case $[J_i,J_j]=\Gamma_{ij}^kJ_k$, in which case we get non-commuting flows on 
a loop group, for which a Lagrangian description was proposed recently in \cite{CNSV2022} for 
(1+1)-dimensional systems.}, i.e., $[J_i,J_j]=[J_j,J_k]=[J_k,J_i]=0$. In \eqref{eq:GLagr} 
we have denoted $\partial/\partial \xi_{l_j}=:\partial_j$, etc. for the sake of brevity.  
Like \eqref{eq:Lagr} the Lagrangian \eqref{eq:GLagr} can be viewed as a component of a Lagrangian 3-form 
\begin{equation} \label{eq:matLagr3form}
\mathsf{L}= \sum_{i<j<k} \mathcal{L}_{ijk}\,{\rm d}x_i\wedge {\rm d}x_j\wedge {\rm d}x_k\  , 
\end{equation} 
which is closed on solutions of the Euler Lagrange equations
\begin{equation}\label{eq:Geqs}
\partial_i G_{jk}=G_{ik}J_iG_{ji}\  , \quad i\neq j\neq k\neq i\  . 
\end{equation}

The main statement is that these Lagrangians form the components of a Lagrangian 3-form. Thus, we have  
\begin{theorem}
The Lagrangian 3-form \eqref{eq:matLagr3form}
has a double zero on solutions of the fundamental set of equations \eqref{eq:Geqs}. 
\end{theorem}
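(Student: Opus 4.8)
The plan is to follow the strategy of the scalar Theorem 2.2, replacing the scalar residuals $\Gamma_{p;qr}$ by their matrix counterparts and carefully tracking the ordering of factors inside the traces. First I would introduce the matrix Euler--Lagrange residual
\[
\Gamma_{i;jk}:=\partial_i G_{jk}-G_{ik}J_iG_{ji}\ ,
\]
which vanishes precisely on solutions of \eqref{eq:Geqs}, and note that the variational condition $\delta\,{\rm d}\mathsf{L}=0$ reproduces these residuals as the coefficients of the independent field variations, exactly as in the scalar situation. The goal is then to show that the coefficient $\mathcal{A}_{ijkl}$ of ${\rm d}x_i\wedge{\rm d}x_j\wedge{\rm d}x_k\wedge{\rm d}x_l$ in ${\rm d}\mathsf{L}$ can be written as a sum of traces in which every summand carries at least two factors of the form $\Gamma_{\cdot;\cdot\cdot}$, so that it vanishes to second order on the Darboux solution manifold.

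Concretely, I would compute ${\rm d}\mathsf{L}$ by forming the alternating sum
\[
\mathcal{A}_{ijkl}=\partial_i\mathcal{L}_{jkl}-\partial_j\mathcal{L}_{ikl}+\partial_k\mathcal{L}_{ijl}-\partial_l\mathcal{L}_{ijk}\ ,
\]
then substitute the explicit Lagrangian \eqref{eq:GLagr} and carry out the differentiations. Derivatives falling on the kinetic terms produce second-order derivatives $\partial_l\partial_k G_{ij}$ together with products of a first derivative and a cubic term, while differentiating the cubic part yields $\partial_l$ acting on a single $G$ inside a quartic trace. Using the commutativity $[J_i,J_j]=0$ to move the constant matrices freely, and the cyclicity of the trace to rotate the matrix strings, one regroups these contributions into traces of the schematic form $\textrm{tr}\{\Gamma_{\cdot;\cdot\cdot}J_\cdot\,\Gamma_{\cdot;\cdot\cdot}J_\cdot\}$. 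The pure second-order $G$-derivatives are expected to cancel among themselves after antisymmetrisation in $(i,j,k,l)$, just as the $\partial^2 B$ terms do in the scalar computation.

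The main obstacle is the bookkeeping in the non-commutative setting: whereas in the scalar proof the $\Gamma$-factors are numbers that may be reordered at will, here the matrix fields $G_{\cdot\cdot}$ and the constants $J_\cdot$ must be kept in their correct cyclic order inside each trace, and it is only the commutativity of the $J$'s together with trace cyclicity that allows the quartic terms coming from the cubic part of $\mathcal{L}$ to be matched against the cross terms produced by substituting the Darboux relations into the kinetic part. I would verify this matching term by term, and as a consistency check specialise to $N=1$ with $J_i=1$, recovering exactly the double-zero expression of Theorem 2.2 in terms of the $\Gamma_{p;qr}$. Once every surviving contribution is exhibited as a product of two residuals, the double zero --- and with it the on-shell closure $\left.{\rm d}\mathsf{L}\right|_{\rm EL}=0$ and the multidimensional consistency of \eqref{eq:Geqs} --- follows at once.
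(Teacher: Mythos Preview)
Your proposal is correct and follows essentially the same route as the paper: define the matrix residuals $\Gamma_{i;jk}=\partial_iG_{jk}-G_{ik}J_iG_{ji}$, compute the alternating sum giving $\mathcal{A}_{ijkl}$, and use commutativity of the $J$'s together with trace cyclicity to regroup everything into traces of products of two $\Gamma$'s. The paper's proof differs only in that it records the explicit double-zero formula for $\mathcal{A}_{ijkl}$ (twenty-four terms of the shape $\textrm{tr}\{\Gamma J\,\Gamma J\}$ obtained by signed cyclic permutation of $(i,j,k,l)$), whereas you describe the mechanism without writing it out.
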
   
\begin{proof}
The proof is again computational, and in essence similar to the one of Theorem 2.2, with the main difference 
occurring in the matrix ordering within the trace. Computing the differential of $\mathsf{L}$ 
we get in the matrix case 
\[ {\rm d}\mathsf{L}=\sum_{i,j,k,l} \mathcal{A}_{ijkl}\,{\rm d}x_i\wedge {\rm d}x_j\wedge {\rm d}x_k\wedge {\rm d}x_l\  , \] 
with 
\begin{align*} 
\mathcal{A}_{ijkl}=& \tfrac{1}{2}{\rm tr}\left\{ \Gamma_{l;i,j} J_i\Gamma_{k;j,i} J_j - \Gamma_{k;i,j} J_i\Gamma_{l;j,i} J_j \right. \\ 
&\qquad +  \Gamma_{l;k,i} J_k\Gamma_{j;i,k} J_i - \Gamma_{j;k,i} J_k\Gamma_{l;i,k} J_i  \\ 
&\qquad \left. \Gamma_{l;j,k} J_j\Gamma_{i;k,j} J_k - \Gamma_{i;j,k} J_j\Gamma_{l;k,j} J_k\pm {\rm cycl}\ \ (ijkl) \right\}\ , 
\end{align*} 
where the cyclic permutation over the indices $(i,j,k,l)$ is done with alternating signs of the six terms inside the bracket, 
resulting in 24 terms in total. Here the quantities $\Gamma$ are given by 
\[ \Gamma_{i;j.k}=\partial_iG_{jk}-G_{ik}J_i G_{ji}\   , \] 
and hence we have a double zero expansion of ${\rm d}\mathsf{L}$ implying that the generalised Euler-Lagrange equations 
arising from $\delta{\rm d}\mathsf{L}=0$ for all $G_{ij}$ varied independently (for different indices) gives rise to the entire 
system of matrix Darboux equations to yield the critical point of the action 
\[ S[G_{\cdot,\cdot}(\boldsymbol{x});\mathcal{V}]=\int_{\mathcal{V}} \mathsf{L} \  , \] 
as a functional of all the matrix fields $G_{\cdot,\cdot}$ as well as of the hypersurfaces $\mathcal{V}$ in the space of independent 
variables. As a consequence of the double-zero expansion form we have ${\rm d}\mathsf{L}=0$ for the fields $G$ obeying the set of 
EL equations, and hence the action is independent of the choice of hypersurface for those critical fields.   
\end{proof}

More or less simultaneously to our paper \cite{NijMaill}, and independently, Bogdanov and Manakov 
investigated a (2+1)-dimensional Lagrangian matrix system, cf. \cite{BogdMan}.  
In retrospect both systems are very similar and originate from the 
consideration of nonlocal inverse problems, either through Direct Linearisation in the case of 
\cite{NijMaill}, in a framework also exploited for the case of three-dimensional matrix lattice equations, 
cf. \cite{Nij85LMP,NC90}, or using nonlocal $\bar{\partial}$ in the case of 
\cite{BogdMan}. 
Like in the scalar case of section 2, these Lagrangians can be seen as components of a Lagrangian 
3-form, and in a precise sense they generate the entire hierarchy of matrix KP equations (I will not 
dwell on that aspect in the present note). 

To be more precise let us first, in the notation of \cite{NijMaill}, specify the 
matrix (or, in the parlance of the last decade, the non-Abelian or non-commutative) KP 
structure. Note, that one of the first papers that addressed the matrix KP system, 
from an inverse scattering point of view, was \cite{Konop82}. The main set of equations, in fact the matrix generalisation of \eqref{eq:Srels}, is the family of relations given by  
\begin{equation}\label{eq:matHeq} 
\partial_k^J H_{ab}=\frac{J}{k-a}H_{ab}-H_{ab}\frac{J}{k-b} +H_{ak}JH_{kb}\  , 
\end{equation} 
where $k\neq a\neq b\neq k $ are complex valued parameters and the derivatives 
$\partial_k^J$ is with respect to some Miwa type variables $\xi^J_k$ characterised by 
the constant matrix $J$ as well as the label $k$ which here is complex parameter. 
The family of equations \eqref{eq:matHeq} is multidimensionally 
consistent for different values of $k$ and commuting sets of matrices $J$, as can be readily verified. 

A Lagrangian for the set of equations \eqref{eq:matHeq} is given by 
\begin{align}\label{eq:HLagr} 
\mathcal{L}_{klm} & 
=\tfrac{1}{2}{\rm tr}\left\{ H_{ml}\widetilde{J}(\partial^J_k H_{lm})\widehat{J} 
- (\partial_k^J H_{ml})\widetilde{J} H_{lm} \widehat{J} \right.  \nonumber \\ 
& + H_{km}\widehat{J}(\partial^{\widetilde{J}}_l H_{mk})J 
- (\partial_l^{\widetilde{J}} H_{km})\widehat{J} H_{mk} J \nonumber \\ 
& \left. + H_{lk}J(\partial^{\widehat{J}}_m H_{kl})\widetilde{J} 
- (\partial_m^{\widehat{J}} H_{lk})J H_{kl} \widetilde{J} \right\}  \nonumber \\  
& + {\rm tr} \left\{ H_{ml}\widetilde{J}H_{lm}\frac{J\widehat{J}}{k-m} 
-H_{ml} \frac{\widetilde{J}J}{k-l} H_{lm} \widehat{J}  \right. \nonumber \\  
& + H_{km}\widehat{J}H_{mk}\frac{\widetilde{J}J}{l-k} 
-H_{km} \frac{\widehat{J}\widetilde{J}}{l-m} H_{mk} J \nonumber \\ 
& \left. + H_{lk}JH_{kl}\frac{\widehat{J}\widetilde{J}}{m-l} 
-H_{lk} \frac{\widehat{J}J}{m-k} H_{kl} \widetilde{J}\right\} \nonumber \\ 
& + {\rm tr}\left\{ H_{lm}\widehat{J} H_{mk} J H_{kl} \widetilde{J} 
- H_{ml}\widetilde{J} H_{lk} J H_{km} \widehat{J} 
  \right\}\ ,  
\end{align}    
which essentially is equivalent to the Lagrangian of \cite{BogdMan}. The variational 
equations 
\[ \frac{\delta\mathcal{L}_{klm}}{\delta H_{ml}^T}=0 \quad \Rightarrow \quad 
\partial_k^J H_{lm}=\frac{J}{k-l}H_{lm}-H_{lm}\frac{J}{k-m} +H_{lk}JH_{km}\  , 
 \] 
 and similarly the other equations with $k,l,m$ and $J,\widetilde{J}, \widehat{J}$ 
respectively, all permuted, follow from this Lagrangian. By expanding the Miwa variables 
 we can derive Lagrangians for the matrix KP hierarchy (examples of matrix KP hierarchy 
 equations arising from the analogous Lagrange structure were 
 provided in \cite{BogdMan}). The main new insight provided here, 
 and which is a direct consequence, in fact a specification, of Theorem 5.1,   
is that this Lagrangian structure can be extended to a Lagrangian 3-form structure for the 
matrix KP hierarchy in (matrix) Miwa variables\footnote{Since integrable matrix hierarchies comprise 
not a single sequence of higher time-flows, but several families, each generated by a  
zeroth order time-flow associated with a constant matrix $J$, this matrix serves as the 
label for the corresponding sequence of higher times $t^J_j$, cf. e.g. \cite{Dickey-book}, 
and associated Miwa type variables $\xi_p^J$ can be defined by `compounding' those hierarchies in 
the sense of \cite{Nij88}, i.e. constructing weighted sums of higher time derivatives 
as in \eqref{eq:Miwa}. }, provided by 
\newcommand{\tvarphi}{{}^t\!\varphi}
\begin{align*} \mathsf{L} = 
& \mathcal{L}_{klm} {\rm d}\xi^J_k\wedge {\rm d}\xi_l^{\widetilde{J}}\wedge 
{\rm d}\xi_m^{\widehat{J}}+ 
\mathcal{L}_{lmn} {\rm d}\xi^{\widetilde{J}}_l\wedge {\rm d}\xi_m^{\widehat{J}}\wedge 
{\rm d}\xi_n^{\overline{J}} \nonumber \\ 
& + \mathcal{L}_{mnk} {\rm d}\xi^{\widehat{J}}_m\wedge {\rm d}\xi_n^{\overline{J}}\wedge 
{\rm d}\xi_k^{J}
+\mathcal{L}_{klm} {\rm d}\xi^J_k\wedge {\rm d}\xi_l^{\widetilde{J}}\wedge 
{\rm d}\xi_m^{\widehat{J}}
\end{align*}
(which can be readily extended to a multi-sum involving more variables of the type $x_k^J$ 
with different labels and different matrices $J$). As a conclusion this provides 
the proper variational structure of the matrix KP hierarchy in its generating form.  
This is direct consequence of Theorem 5.1, where the correspondence between the matrices  
$G_{kl}$ and the matrices $H_{kl}$ is obtained by introducing matrix analogues of the 
plane-wave factors $\rho_k$ and $\sigma_{k'}$, given by nonsingular $N\times N$ matrices 
$\varphi^0_k$ and $\tvarphi^0_l$  obeying
\[ \partial_m^J\varphi^0_k=\frac{J}{m-k}\varphi^0\ , \quad 
\partial_m^J \tvarphi^0_l= -\tvarphi^0_l\frac{J}{m-l}\  ,  \]
(where the superscript $\phantom{}^0$ denotes the aspect that these are 'free' solutions 
of the underlying linear system), and setting 
$$ G_{kl}=\tvarphi^0_l H_{lk}\varphi_k^0\  , \quad {\rm and}\quad J_m=(\tvarphi^0_m)^{-1} J (\varphi^0_m)^{-1}\  . $$      
As a consequence, relying on Theorem 5.1, the Lagrangian \eqref{eq:HLagr} form the 
components of a Lagrangian 3-form whose generalised EL equations provide the system of 
equations \eqref{eq:matHeq}. This is essentially the generating set of equations for the 
matrix KP hierarchy. 

\section{Discussion}\label{S:Concl}

The results in this paper generalise in an essential way those of \cite{SNC21} where the 
multiform structure of the KP hierarchy was established in the conventional presentation 
in terms of pseudo-differential operators. In this paper we consider the KP hierarchy 
from the point of view of \textit{generating PDEs}, namely through their representation 
in terms of Miwa variables. This has the advantage that the structure becomes much more 
covariant. Thus the KP hierarchy is being treated as multi-parameter family of equations 
in the sense of what we called a \emph{generating PDE}, i.e. a PDE in terms of Miwa type 
variables, which by expansion in powers of the parameters lead to the conventional hierarchy 
of KP equations in the multi-time form (in the cases of (1+1)-dimensional hierarchies 
these generating PDEs are obtained from the conventional enumerative hierarchies, by 
a process of `compounding', cf \cite{Nij88}). A connection between Lagrangian multiforms 
in this parameter-family representation and the classical $r$-matrix was recently put forward in 
\cite{CaudStopp}. Lifting those results to the case of the (2+1)-dimensional KP hierarchy 
could provide a novel route to the quantisation of the KP system. Identifying a classical (and 
possible in due course a quantum) $R$-matrix for the KP system would form a major step 
towards both a canonical as well as path integral route towards its quantisation. 

In this context it is worth mentioning another connection. In the Direct Linearising 
Transform (DLT) approach to the KP system (discrete as well as continuous), cf. 
\cite{NCWQ84,Nij85LMP,NC90,FN16} the invariance under integral transforms with a kernel 
$G_{kk'}$ is the key element of the construction. This kernel 
is a path-independent line-integral in the space of independent variables of the system, 
constructed from of a closed (on solutions of the equation of motion) 1-form 
constructed from the Lax multiplet eigenfunctions. The kernel $G_{kk'}$ solves 
the following class of generalised Darboux systems
\begin{equation}\label{eq:Geq} 
 \partial_i G_{kk'}=  \iint_{D_i} G_{lk'}\,{\rm d}\zeta_i(l,l')\, G_{kl'}\  , 
\quad i\in I,  
\end{equation}  
where the integration is over a set (labelled by $I$) of domains $D_i\in \mathbb{C}\times\mathbb{C}$ in some 
spectral type variables $l$ and $l'$ over a set of matrix valued measures ${\rm d}\zeta_i(l,l')$ 
in that domain. The independent variables are assumed to be characterised by the 
integration data: $x_i=x(\zeta_i,D_i)$ and $\partial_i=\partial/\partial x_i$. 
Notable is the dual role played by $G_{kk'}$, on the one hand as the integral kernel of an integral 
transform, on the other hand as a solution of a parameter-family of nonlinear equations 
of Darboux type, which can be reconstructed from the quantities $H_{k,k'}$ of the previous section.  
Most important in the present context is the observation that this 
general system can be endowed with a Lagrangian 3-form structure very similar 
to the ones described in the previous section, namely given by 
Lagrangian components 
\begin{align}\label{eq:dZLagr}
\mathcal{L}_{ijk}=& \tfrac{1}{2}\iint_{D_i}\iint_{D_j} {\rm tr}\left\{ 
G_{l,k'} \,{\rm d}\zeta_i(l,l')\, (\partial_k G_{k,l'})\, {\rm d}\zeta_j(k,k') \right. \nonumber\\
& -\left. (\partial_k G_{l,k'})\, {\rm d}\zeta_i(l,l')\, G_{k,l'}\, {\rm d}\zeta_j(k,k')
+ {\rm cycl}(ijk) \right\} \nonumber \\ 
& + \iint_{D_i} \iint_{D_j} \iint_{D_k} {\rm tr}\left\{ 
G_{l,k'}\,{\rm d}\zeta_i(l,l')\,G_{m,l'}\,{\rm d}\zeta_j(m,m')\,G_{k,m'}\,
{\rm d}\zeta_k(k,k') \right. \nonumber \\ 
& \qquad \qquad \left. -  G_{l,k'}\,{\rm d}\zeta_i(l,l')\,G_{m,l'}\,{\rm d}\zeta_k(m,m')\,G_{k,m'}\,
{\rm d}\zeta_j(k,k')  \right\} \  . 
\end{align}
It can be proven by similar computations, and under some generous assumptions 
on the integrations in the formula, that analogous statements to the ones  
in the previous sections, that the Lagrangian 3-form with components given by 
\eqref{eq:dZLagr} possesses a Lagrangian multiform structure. This forms arguably the 
most general multiform structure so far considered in the theory. Note also that the 
corresponding action functional $S[G_{\cdot,\cdot}(\boldsymbol{x});\mathcal{V}]$ 
where as before $\mathcal{V}$ is an arbitrary 3-dimensional hypersurface in the space of 
independent variables $\boldsymbol{x}=(\{ x_i, i\in I\})$, 
shows some resemblance some action functionals associated with the 
Chern-Simons theory in topological field theory, but this connection still remains 
to be explored.  

\subsection*{Acknowledgements}

The author has benefited from many discussions with V. Caudrelier, L. Peng, D. Sleigh and M. Vermeeren 
on various issues regarding Lagrangian multiform theory. 
He is supported by EPSRC grant EP/W007290/1

\subsection*{Data availability statement} 
Data sharing is not applicable to this article as no datasets were generated or analysed during the current study.

\subsection*{Conflict of interest statement} 
The corresponding author states that there is no conflict of interest.

\small


\begin{thebibliography}{99}

 
\bibitem{BobSur15}
A.I. Bobenko and Yu.B. Suris. Discrete pluriharmonic functions as solutions of linear pluri-Lagrangian systems. \emph{Communications in Mathemathical Physics}, 2015, \textbf{336} 199-215.
\bibitem{BogdKonopel95} 
L.V. Bogdanov and B.G. Konopelchenko, Lattice and $q$-difference Darboux-Zakharov-Manakov 
systems via $\bar{{\partial}}$-dressing method, \emph{J.Phys. A: Math. Gen.} {\bf 28} no. 5 (1995) 
L173--L178.    
\bibitem{BogdKonopel} 
L.V. Bogdanov and B.G. Konopelchenko, Analytic-bilinear approach to integrable 
hierarchies. I Generalized KP hierarchy, \emph{Journ. Math. Phys.} {\bf 39} no. 9 (1998) 
4683--4700.  
\bibitem{BogdMan}
L.V. Bogdanov and S.V. Manakov, The non-local $\bar{\partial}$ problem and (2+1)-dimensional 
soliton equations, J. Phys, A: Math. Gen. {\bf 21} (1988) L537-L544. 

\bibitem{BolPetSur2013b}
R. Boll, M. Petrera and Yu.B. Suris. What is integrability of discrete variational systems? \emph{Proceedings of the Royal 
Society A}, 2014, \textbf{470}: 20130550 (published 11 December 2013).
\bibitem{BolPetSur2015} 
R. Boll, M. Petrera and Yu.B. Suris. On the variational interpretation of the discrete KP equation. In: \emph{Advances in Discrete Differential Geometry}, Ed. 
A.I. Bobenko, (Springer Verlag, 2016) pp. 379--405.  
  

\bibitem{CNSV2022}
V. Caudrelier, F.W. Nijhoff, D. Sleigh and M. Vermeeren, 
Lagrangian multiforms on Lie groups and noncommuting flows, ArXiv:2204.09663. 
\bibitem{CaudStopp} 
V. Caudrelier, M. Stoppato and B. Vicedo, Classical Yang-Baxter equation, Lagrangian multiforms and 
ultralocal integrable hierarchies, arXiv: 2201.08286. 
\bibitem{CDS97}
J. Cie\'sli\'nski, A. Doliwa and P.M. Santini, The integrable discrete analogue of orthogonal 
coordinate systems are multidimensional circular lattices, \emph{Phys. Lett. A} {\bf 235} (1997) 480--488. 
\bibitem{Darboux} 
G. Darboux, M\'emoire sur la th\'eorie des coordonn\'ees curvilignes, et des syst\`emes 
orthogonaux. \textit{Ann. Sci. de l'\'ENS}, 2eme s\'erie, tome {\bf 7} (1878) 275--348.  
\bibitem{DickeyKP} 
L.A. Dickey, On Hamiltonian and Lagrangian formalisms for the KP-hierarchy of integrable
equations. \textit{Ann. New York Acad. Sci.} {\bf 491}, no. 1 (1987) 131–48. 
\bibitem{Dickey-book} 
L.A. Dickey, Soliton Equations and Hamiltonian Systems, 2nd ed. World Scientific, 2003. 
\bibitem{Doliwa}
A. Doliwa, On $\tau$-function of conjugate nets, \textit{J. of Nonl. Math. Phys.} {\bf 12} (2004) 
244--252.   
\bibitem{DS97}
A. Doliwa and P.M. Santini, 
    Multidimensional quadrilateral lattices are integrable.
    \textit{Phys. Lett. A} \textbf{233}(1997) 365--372.
\bibitem{Doliwa-etal}
A. Doliwa, M. Ma\~nas, L. Martinez-Alonso, E. Medina and P.M. Santini, Charged free 
fermions, vertex operators and the classical theory of conjugate nets, 
\emph{J Phys A: Math. Gen.} {\bf 32} (1999) 1197–1216. 
\bibitem{DN91}
I.Ya. Dorfman and F.W. Nijhoff, On a (2+1)-dimensional version of the Krichever-Novikov equation, 
\emph{Phys. Lett. A} {\bf 157} (1991) 107--112. 
\bibitem{FN16}
W. Fu and F.W. Nijhoff, 
    Direct linearizing transform for three-dimensional discrete integrable systems: the lattice AKP, BKP and CKP equations.
    \emph{Proc. R. Soc. A} \textbf{473} (2016) 20160915.
\bibitem{FN17}
    W. Fu and F.W. Nijhoff, 
    Linear integral equations, infinite matrices and soliton hierarchies.
    \emph{J. Math. Phys.} {\bf 59} (2018) 071101. 
\bibitem{HJN16}
    J. Hietarinta, N. Joshi and F.W. Nijhoff, 
    \textit{Discrete Systems and Integrability}. \textit{Cambridge Texts in Applied Mathematics} 
    (Cambridge University Press, 2016).
\bibitem{Konop82} 
B.G. Konopelchenko, On the general structure of nonlinear evolution equations and their 
B\"acklund transformations connected with the matrix non-stationary Schr\"odinger 
spectral problem, \emph{J. Phys. A: Math.Gen.} {\bf 15} (1982) 3425–-3437. 
\bibitem{Konop}
B.G. Konopelchenko, Introduction to Multidimensional Integrable 
Equations. The Inverse Spectral Transform in 2+1 Dimensions, 
(Plenum Press, New York, 1992). 
\bibitem{Lame}
G.Lam\'e, Le\c{c}ons sur les coordonn\'ees curvilignes et leurs diverses applications, (Mallet–Bachalier, Paris,
1859).     
\bibitem{LobbNij2009}
S. Lobb and F.W. Nijhoff, Lagrangian multiforms and multidimensional consistency, \textit{J. Phys. A: Math. 
Theor.} {\bf 42} (2009) 454013. 
\bibitem{LNQ}
S. Lobb, F.W. Nijhoff and G.R.W. Quispel, Lagrangian multiform structure for the lattice KP system, 
\textit{J. Phys. A: Math. Theor.} {\bf 42} (2009) 472002.   
\bibitem{LobbNij2018} 
S. Lobb and F.W. Nijhoff, A variational principle for discrete integrable systems, \textit{SIGMA} {\bf 14} 
(2018) 041.  
\bibitem{MDS97} 
M. Ma\~nas, A. Doliwa and P.M. Santini, Darboux transformations for multidimensional 
quadrilateral lattices. I. \emph{Phys. Lett. A} {\bf 232} (1997) 99--105. 
\bibitem{MartinezKonop}
L. Martinez-Alonso and B. Konopelchenko, The KP Hierarchy in Miwa coordinates, \textit{Phys. Lett. A} {\bf 258} 
(1999) 272--278. 
\bibitem{Miw82}
T. Miwa, 
    On Hirota's difference equations.
    \textit{Proc. Japan Acad.} (1982) \textbf{58A}, 9--12.
\bibitem{Nij85LMP}
F.W. Nijhoff, 
Theory of integrable three-dimensional lattices, 
\textit{Lett. Math. Phys.}, \textbf{9} (1985) 235--241.  
\bibitem{Nij-tau85}   
    F.W. Nijhoff, 
The Direct Linearizing Transform for the $\tau$-function in Three-dimensional Lattice Equations, 
\textit{Phys. Lett}. {\bf 110A} (1985) 10--14.
\bibitem{Nij88} 
F.W. Nijhoff, Linear integral transformations and hierarchies of integrable nonlinear evolution 
equations, Physica {\bf D31} (1988) 339--388. 
\bibitem{Nijtbp} 
F.W. Nijhoff, Lagrangian multiform structure of discrete and semi-discrete KP systems, in preparation. 
\bibitem{NCWQ84} 
F.W. Nijhoff, H.W. Capel, G.L. Wiersma and G.R.W. Quispel, B\"acklund transformations and 
three-dimensional lattice equations,  
\textit{Phys. Lett.}  {\bf 105A} (1984) 267--272.
\bibitem{NC90} 
F.W. Nijhoff and H.W. Capel, The direct linearization approach to hierarchies of integrable
PDE’s in 2 + 1 dimensions. I. Lattice equations and the differential–difference hierarchies.
\textit{Inverse Probl.} \textbf{6} (1990) 567--590.   
\bibitem{NHJ}
F.W. Nijhoff, A. Hone and N. Joshi, On a Schwarzian PDE associated with the KdV hierarchy, 
\textit{Phys. Lett. A} {\bf 267} (2000) 147--156. 
\bibitem{NijMaill}
F.W. Nijhoff and J.-M. Maillet, Algebraic Structure of Integrable Systems in $D=2+1$ and Routes 
towards Multidimensional Integrability, in: \emph{Nonlinear Evolutions}, Proceedings of 
the IVth NEEDS Conference, Ed. J.J.P. L\'eon, (World Scientific, Signapore, 1988);  
Preprint PAR-LPTHE 87-45, September 1987. 
 
  
\bibitem{SAF1984} 
P.M. Santini, M.J. Ablowitz and A.S. Fokas, The direct linearization of a class of nonlinear evolution equations, 
\textit{J. Math. Phys.}, \textbf{25} (1984) 2614--2619.  
\bibitem{Sato} 
M. Sato, Soliton equations as dynamical systems on infinite dimensional Grassmann manifolds, 
\textit{RIMS K\^oky\^uroku} {\bf 439} (1981) 30--46. 
\bibitem{SNC21} 
D. Sleigh, F.W. Nijhoff and V. Caudrelier, Lagrangian multiforms for Kadomtsev-Petviashvili (KP) and the 
Gel'fand-Dickey hierarchy. \textit{Int. Math. Res. Notices} Vol. 2021, 1--41.  
\bibitem{SNC19} 
D. Sleigh, F.W. Nijhoff and V. Caudrelier, A variational approach to Lax representations, \textit{Journ. Geom. Phys.} 
{\bf 142} (2019) 66--79.  
\bibitem{SNC20} 
D. Sleigh, F.W. Nijhoff and V. Caudrelier, Variational symmetries and Lagrangian multiforms, \textit{Lett. Math. Phys.} 
{\bf 110} no. 4 (2020) 805--826.  
\bibitem{Sleigh-thesis}
D. Sleigh, The Lagrangian multiform approach to integrable systems, University of Leeds, PhD thesis (2020).  
 
\bibitem{Suris2012} 
Yu. B. Suris, Variational formulation of commuting Hamiltonian flows: multi-time Lagrangian 1-forms.  
\emph{J. Geometric Mechanics}, 2013, \textbf{5} N0. 3, pp. 365--379. 
\bibitem{Suris2016} 
Y. Suris, Variational symmetries and pluri-Lagrangian systems, in \textit{Dynamical Systems, 
Number Theory and Applications}, (Springer Verlag, 2016) 255--266
\bibitem{SurVerm} 
Yu. Suris and M. Vermeeren, On the Lagrangian structure of integrable hierarchies, in: \textit{Advances in Discrete Differential Geometry}, ed. A. Bobenko, (Springer Verlag, Berlin), pp. 347--378. 
\bibitem{Verm}
M. Vermeeren, Continuum limits of variational systems. Technische Universit\"at Berlin, PhD thesis (2018). 
  
\bibitem{TN}
A. Tongas and F.W. Nijhoff, Generalized hyperbolic Ernst equations for an Einstein-Maxwell-Weyl field, 
\textit{J. Phys A: Math. Gen.} {\bf 38} (2005) 895--906. 
 
\bibitem{Y-KLN} 
S. Yoo-Kong, S. Lobb and F.W. Nijhoff, Discrete-time Calogero-Moser systems and Lagrangian 1-form structure. 
\textit{J. Phys. A: Math. Theor.} {\bf 44} (2011) 365203.  
  
\bibitem{ZM85}
V.E. Zakharov and S.V. Manakov, 
    Construction of higher-dimensional nonlinear integrable systems and of their solutions.
    \textit{Funct. Anal. Appl.} (1985) \textbf{19}, 89--101.
 
\bibitem{Zakharov96}
V.E. Zakharov, On integrability of thev equations describing $N$-orthogonal curvilinear 
coordinate systems and Hamiltonian integrable systems of hydrodynamic type. Part I. Integration of the 
Lam\'e equations. \textit{Duke Math. J.} {\bf 94} (1996) 103--139.  
  



\end{thebibliography}
\end{document}